\DeclareMathOperator*{\argmax}{arg\,max}
\title{Robust Self-Triggered Control Approaches Optimizing Sensors Utilization  with Asynchronous Measurements}
\newif\ifuniqueAffiliation
\author{ \href{https://orcid.org/0000-0001-5024-9224}{\includegraphics[scale=0.06]{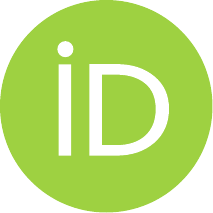}\hspace{1mm}Abbas Tariverdi}\thanks{This work was conducted while Abbas Tariverdi was with CRIStAL (Research Center in Computer Science, Signal and Automatic Control of Lille), UMR 9189, CNRS, University of Lille, Centrale Lille, Lille, France.} \\
	\texttt{abbasta@abbasta.com} 
}
\newbox{\orcid}\sbox{\orcid}{\includegraphics[scale=0.06]{orcid.pdf}} 
\author[1]{%
	\href{https://orcid.org/0000-0000-0000-0000}{\usebox{\orcid}\hspace{1mm}David S.~Hippocampus\thanks{\texttt{hippo@cs.cranberry-lemon.edu}}}%
}
\author[1,2]{%
	\href{https://orcid.org/0000-0000-0000-0000}{\usebox{\orcid}\hspace{1mm}Elias D.~Striatum\thanks{\texttt{stariate@ee.mount-sheikh.edu}}}%
}
\affil[1]{Department of Computer Science, Cranberry-Lemon University, Pittsburgh, PA 15213}
\affil[2]{Department of Electrical Engineering, Mount-Sheikh University, Santa Narimana, Levand}
\newtheorem{proposition}{Proposition}
\newtheorem{remark}{Remark} 
\newtheorem{assumption}{Assumption}    
\newtheorem{definition}{Definition}    
\begin{document}
\maketitle



\begin{abstract}
\textit{Note: This research was conducted in 2017--2018. The literature review has not been updated and may not reflect subsequent or concurrent developments in the field.} 

Most control systems run on digital hardware with limited communication resources. This work develops self-triggered control for linear systems where sensors update independently (asynchronous measurements). The controller computes an optimal horizon at each sampling instant, selecting which sensor to read over the next several time steps to maximize inter-sample intervals while maintaining stability.

Two implementations address computational complexity. The online version solves an optimization problem at each update for theoretical optimality. The offline version precomputes optimal horizons using conic partitioning, reducing online computation to a lookup. Both guarantee exponential stability for unperturbed systems and global uniform ultimate boundedness for systems with bounded disturbances. Simulations demonstrate 59-74\% reductions in sensor utilization compared to periodic sampling. The framework enables resource-efficient control in networked systems with communication constraints.
\end{abstract}

\keywords{Event-triggered control \and global uniform ultimate boundedness \and sampled-data systems \and self-Triggered control \and network control systems}

\section{Introduction}\label{Intro}

Most control systems now run on digital hardware. Networked Control Systems (NCSs) require careful management of limited computational and communication resources \cite{heemels2012introduction}. This drives fundamental design questions: when to sample sensors, when to update control actions, and when to transmit data \cite{miskowicz2018event}.

Event-based control addresses these resource constraints by updating control inputs only when necessary. Two approaches exist: \textit{event-triggered control}, which monitors measurements continuously and triggers updates when a threshold condition is violated, and \textit{self-triggered control}, which precomputes the next update time using sampled measurements and plant dynamics \cite{heemels2012introduction}.

The key challenge is stability. Self-triggered mechanisms must enlarge average sampling intervals, reducing computational and energy costs, while maintaining system stability. This work designs self-triggered state-feedback controllers for linear systems that achieve this balance. Self-triggered control for systems with \textit{asynchronous measurements} presents unique challenges. Unlike synchronous sampling where all sensors update simultaneously, asynchronous measurements arrive at different rates. This is common when sensors have varying priorities, power constraints, or communication delays.

Stability analysis for aperiodic sampled-data systems has developed along four main approaches \cite{hetel2017recent}: time-delay modeling, hybrid systems, discrete-time methods, and input-output stability. The \textit{time-delay approach} \cite{mikheev1988asymptotic,fridman2003delay} models the system as continuous-time with time-varying delay, extending naturally to asynchronous measurements where each sensor introduces an independent delay. The \textit{hybrid systems approach} \cite{dullerud1999asynchronous,toivonen1992sampled} models sampling as discrete jumps in continuous dynamics; here, each asynchronous sensor creates its own jump sequence.

The \textit{discrete-time approach} analyzes the spectral radius or constructs appropriate Lyapunov functions. Methods include quasi-quadratic \cite{molchanov1989criteria}, parameter-dependent \cite{daafouz2001parameter}, and path-dependent \cite{lee2006uniform} formulations. For asynchronous systems, this approach addresses packet dropouts and time-varying delays that exceed sampling intervals \cite{cloosterman2010controller,van2010tracking}, common when sensors operate independently. Finally, the \textit{input-output approach} treats sampling and delay as perturbations to the nominal control loop, using small-gain conditions to quantify how independent sensor delays affect closed-loop stability margins \cite{mirkin2007some,fujioka2009stability}.

Self-triggered control leverages these foundations to precompute the next update time using current measurements and plant dynamics. No continuous monitoring is required. Early work discretized plants for linear systems \cite{velasco2003self} or designed $H_\infty$ controllers balancing CPU utilization with performance \cite{wang2008state}. Extensions to nonlinear systems followed, building on event-triggering foundations \cite{tabuada2007event,anta2010sample}.

A key advance came from ISS-based approaches \cite{mazo2009self,mazo2010iss}. Using Lyapunov-based triggering conditions, these methods guarantee exponential input-to-state stability while enlarging sampling intervals. Implementation uses discrete-time evaluation: periodic checking rather than continuous monitoring. Wang and Lemmon extended this framework to guarantee finite-gain $L_2$ stability through continuous Lyapunov decay \cite{wang2009self,wang2010self}, handling arbitrary disturbances independent of the process model.

The relevance of this approach is clear: precomputed sampling decisions dramatically reduce computational overhead compared to continuous event monitoring. The question for asynchronous measurements is whether this advantage persists when sensors operate on independent schedules rather than synchronized clocks.

This paper develops optimal self-triggering mechanisms that maximize average sampling intervals while guaranteeing exponential stability for both unperturbed and perturbed linear systems. At each sampling instant, the controller computes an \textit{optimal horizon}, a sequence specifying which sensor (or none) to read over the next several time steps. This transforms the problem from simple triggering to active sensor scheduling.

A major complectity in optimal self-triggered control is computational intensity. Solving an optimization problem at every sampling instant is often impractical for embedded real-time implementation. We address this through two distinct implementations. The \textit{online implementation} solves the optimization problem at each step, demonstrating theoretical optimality but demanding higher computation. The \textit{offline implementation} utilizes conic partitioning to precompute optimal horizons for specific state-space regions. This reduces the online execution to a low-complexity lookup table, making it suitable for real-time deployment.

Section \ref{Des} establishes the system model and problem formulation. Section \ref{Unperturbed} analyzes ideal conditions, deriving both implementations with stability proofs. Section \ref{Perturbed} extends the framework to systems with bounded disturbances. Section \ref{Simulation} validates the approach through numerical studies, and Section \ref{Conclusion} summarizes the contributions. Detailed proofs are provided in Section \ref{Appendix}.

\section{System Description and Problem Statement}\label{Des}

\subsection{System Description}
We consider a continuous time LTI control system
\begin{equation} \label{syst}
	\begin{split}
		\dot{x}(t)&=Ax(t)+Bu(t),~\forall t \geq 0, 
	\end{split}
\end{equation}
where $ x \in \mathbb{R}^n $ is the system's state, $ u \in \mathbb{R}^m $ is the system's input and $ A $ and $ B $ are matrices of appropriate dimensions. We consider a sampled-data controller
\begin{equation*} \label{fc}
	u(t)=u^*(t_h),~ \forall t_h=hT,~h\in \mathbb{N},
\end{equation*}
in which $ u^*(t_h) $ is constant $ \forall t \in [t_h,t_h+T) $, and the sampling instants $ t_h,~h \in \mathbb{N} $, satisfy
\begin{equation} \label{samplint}
	T=t_{h+1}-t_h>0, ~\forall h\in \mathbb{N},~\text{with} ~ t_0=0.
\end{equation}

The discrete-time model of the system \eqref{syst}  at instants $ t_h $ can be written as 
\begin{equation} \label{dsyst}
	\begin{split}
		{x}(t_{h+1})=e^{AT}x(t_h)+\int\nolimits_{0}^{T}e^{As}B u^*(t_h) ds 
	\end{split}.
\end{equation}

The aim of the paper is to propose a self triggering mechanism in which the next sampling instant $ t_{h+1},~ h\in\mathbb{N} $, is calculated at each sampling instant $ t_h,~h\in \mathbb{N} $, based on partial and/or full information about $ x(t_h) $ and an optimization criterion.

\subsection{System Reformulation}\label{sys-refor}

In classical work concerning self triggering control, at each sampling instant, the next optimal (based on some certain criteria) admissible sampling intervals are computed. In this work, we want to compute the next optimal sampling intervals over a finite horizon. Therefore, we need to introduce remodeling of the system and a few notations regarding sampling horizons and sampling sequences.

We consider a scenario in which all sensors may not be used simultaneously at each sampling instant $ t_h $. In other words, at each sampling instant $ t_h $, we suppose that the proposed mechanism just may use one sensor's measurement while preserves the stability of the system. Therefore, we consider a sampled-data sate-feedback controller
\begin{equation} \label{statefc}
	u^*(t_h)=K\hat{x}(t_h),~ \forall t_h=hT,~h\in \mathbb{N},
\end{equation}
in which $ K $ is a given feedback gain matrix such that the sampling instants $ t_h,~h \in \mathbb{N} $, satisfy
\begin{equation*} \label{samplints}
	T=t_{h+1}-t_h>0, ~\forall h\in \mathbb{N},~\text{with} ~ t_0=0.
\end{equation*}
and
\begin{equation*}
	\begin{split}
		\hat{x}(t_h)=\begin{bmatrix}
			\hat{x}^{(1)}(t_h),
			\cdots,
			\hat{x}^{(m)}(t_h)
		\end{bmatrix}^T\in\mathbb{R}^n, ~ \hat{x}(t_0)\in\mathbb{R}^n
	\end{split},
\end{equation*}
where,
\begin{equation}\label{CntrlMS}
	\begin{split}
		\hat{x}^{(i)}(t_h)=    \left\{\begin{array}{ll}
			x^{(i)}(t_h) & \text{If sensor $ i^{th} $ is sampled at $ t_h $}. \\
			\hat{x}^{(i)}(t_{h-1}) & \text{Otherwise}.
		\end{array}\right.
	\end{split}
\end{equation} 
A description of the system with asynchronous measurement is depicted in Fig.~{ ref{SystemDes}}. 
\begin{figure}[!htbp] 
	\centering
	\subfloat{%
		\resizebox*{0.5\textwidth}{!}{\includegraphics{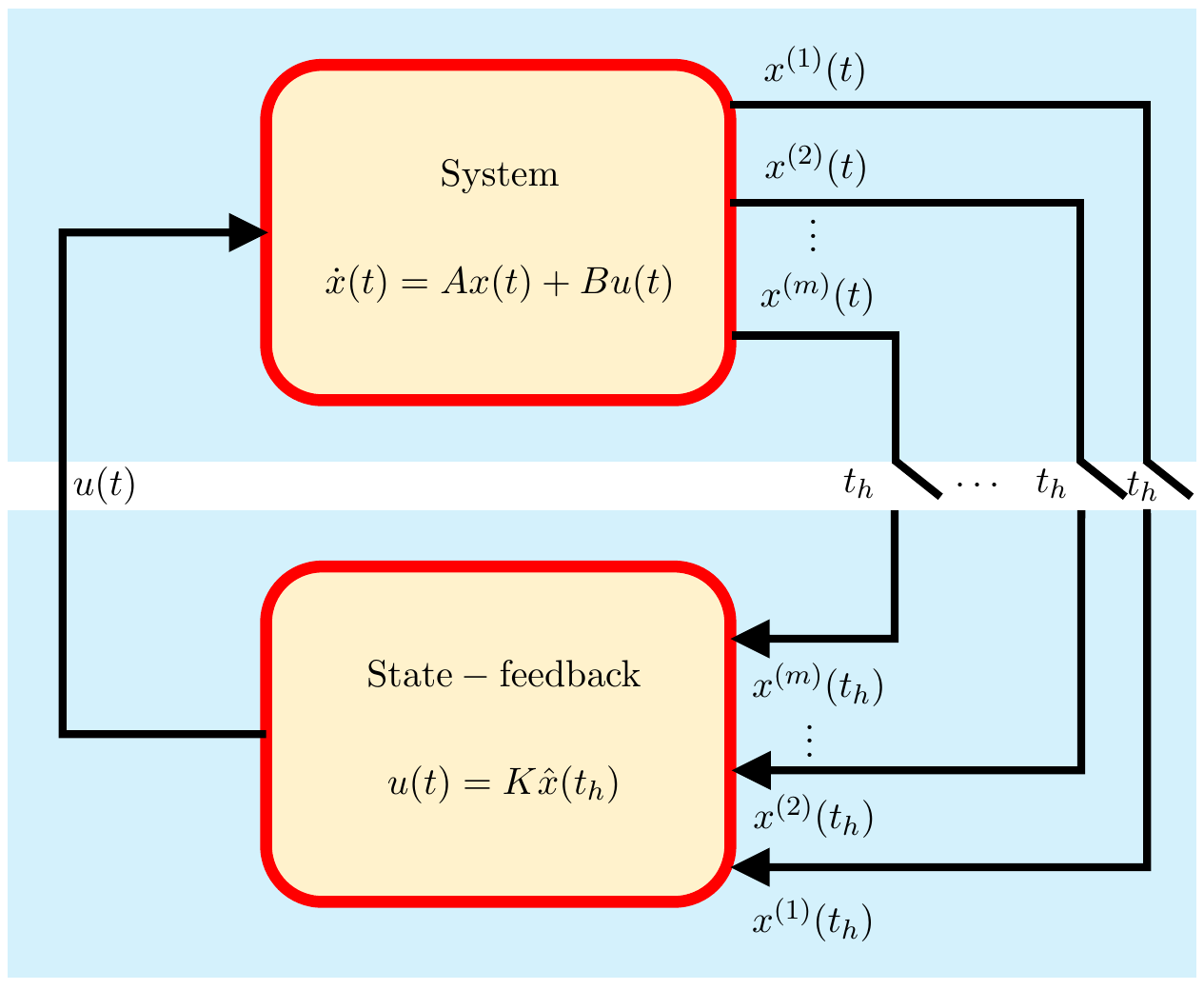}}}\hspace{2pt}
	\caption{ System description with asynchronous measurement.} 	\label{SystemDes}
\end{figure} 

Equivalently, \eqref{CntrlMS} can be written as
\begin{equation}\label{deltalMS}
	\hat{x}^{(i)}(t_h)= \delta^{(i)}(t_h)x^{(i)}(t_h)+(1-\delta^{(i)}(t_h))\hat{x}^{(i)}(t_{h-1}),
\end{equation}  
with
\begin{equation*}
	\begin{split}
		\delta^{(i)}(t_h)=    \left\{\begin{array}{ll}
			1 & \text{If sensor $ i^{th} $ is sampled at $ t_h $}. \\
			0 & \text{Otherwise}.
		\end{array}\right.
	\end{split}
\end{equation*}

In collective form, \eqref{deltalMS} can shown as
\begin{equation}\label{cntrlMSC}
	\begin{split}
		\hat{x}(t_h)=M(t_h)x(t_h)+N(t_h)\hat{x}(t_{h-1})
	\end{split},
\end{equation}
with
\begin{equation*}
	\begin{split}
	M(t_h) & =\text{diag}\{\delta^{(1)}(t_h)I_{n_1}, \delta^{(2)}(t_h)I_{n_2}, \cdots, \\
	& \quad \delta^{({m-1})}(t_h)I_{n_{m-1}},\delta^{(m)}(t_h)I_{n_m}\}.  
\end{split}  
\end{equation*}
and
\begin{equation*}
	\begin{split}
	N(t_h) & =\text{diag}\{(1-\delta^{(1)}(t_h))I_{n_1}, (1-\delta^{(2)}(t_h))I_{n_2},\cdots,\\
	& \quad (1-\delta^{(m)}(t_h))I_{n_m} \}.
\end{split} 
\end{equation*}  

From \eqref{statefc} and \eqref{cntrlMSC}, the discrete-time model of the system \eqref{dsyst}  at instants $ t_h $ can be written as
\begin{equation}\label{MS}
	\begin{split}
	{x}(t_{h+1}) & =e^{AT} x(t_h)+\int\nolimits_{0}^{T}e^{As}Bds K \hat{x}(t_h),\\
	\hat{x}(t_h) & =M(t_h)x(t_h)+N(t_h)\hat{x}(t_{h-1}).
\end{split} 
\end{equation} 
in the collective form, equation \eqref{MS} yields
\begin{equation}\label{SysC}
	\eta_{h+1}=\tilde{A}_{(\mathtt{S}_h)}\eta_{h},~ \forall t_h=hT,~h\in \mathbb{N},~\eta_0=[x(t_0) ~ x(t_0)]^T
\end{equation}
with
\begin{equation*}
	\eta_h=[x(t_h) ~ \hat{x}(t_{h-1})]^T,~\eta_{h+1}=[x(t_{h+1}) ~ \hat{x}(t_{h})]^T,
\end{equation*} 
and
\begin{equation}\label{discollective}
	\tilde{A}_{(\mathtt{S}_h)}= \begin{bmatrix}
		e^{AT}+\int\nolimits_{0}^{T}e^{As}B Kds M(t_h) & \int\nolimits_{0}^{T}e^{As}B Kds N(t_h)\\
		M(t_h) & N(t_h)
	\end{bmatrix}.
\end{equation}
%
%
	
	$\sigma=\{\mathtt{S}_{\sigma}^j\}_{j=1}^{l}=(\mathtt{S}_{\sigma}^1 \mathtt{S}_{\sigma}^2\cdots\mathtt{S}_{\sigma}^l) $, with $l \in \mathbb{N} \setminus \{0\} $, refers to a horizon of sampled sensors with length $ l $, where $ j $ represents the position of a sampled sensor $ \mathtt{S}_{\sigma}^j$ inside the horizon. For a finite set $ \mathcal{S} \subset \mathbb{N} $, we define $ S_{l_{min}}^{l_{max}}(\mathcal{S}) $ the set of all horizons $ \sigma=\{\mathtt{S}_{\sigma}^j\}_{j=1}^{l} $ of length $ l \in [l_{min},l_{max}] $ with values $ \mathtt{S}_{\sigma}^j \in \mathcal{S},~ \forall j\in \{1,\cdots,l\} $:
	\begin{equation}\label{key1}
\begin{split}
			S_{l_{min}}^{l_{max}}(\mathcal{S}) &=\Big \{ \sigma=\{\mathtt{S}_{\sigma}^j\}_{j=1}^{l}: l \in [l_{min},l_{max}], \text{and}, \\
			\mathtt{S}_{\sigma}^j &\in \mathcal{S} ,~\forall j\in \{1,\cdots,l\} \Big \}.
\end{split}
	\end{equation}
	
	We denote by $ \{\sigma_k\}_{k\in \mathbb{N}}$ a sampling sequence composed by sampling horizons $ \sigma_k=(\mathtt{S}_{\sigma_k}^1\cdots \mathtt{S}_{\sigma_k}^{l_k}) \in S_{l_{min}}^{l_{max}}(\mathcal{S}),~ k \in \mathbb{N}  $. The scalars $ \mathtt{S}_{\sigma_k}^{i}, ~ k \in \mathbb{N},~ i\in \{1,\cdots,l_k\} $ define sensors which are used and constitute the sampling sequence where
	$ k $ indicates the index of the horizon and $ i $ the position of this sampling step in the considered horizon $ \sigma_k $. 
	
	The representation of system \eqref{SysC} over a sampling sequence $\{\sigma_k\}_{k\in \mathbb{N}} $ is given as
	\begin{equation} \label{Hdsyst}
		\begin{split}
			{x}{_{k+1}}=\Phi_{\sigma_k} x_k ,~ k\in \mathbb{N},
		\end{split}
	\end{equation}
	with
	\begin{equation} \label{Hdsyst1}
		\begin{split}
			\Phi_{\sigma_k}={\tilde{A}}_{(\mathtt{S}^{l_k}_{\sigma_k})}{\tilde{A}}_{(\mathtt{S}^{l_{k-1}}_{\sigma_k})}\cdots {\tilde{A}}_{(\mathtt{S}^{l_{1}}_{\sigma_k})},~x_k=x(\tau_k).
		\end{split}
	\end{equation}
	The instant $ \tau_k,~ {k\in \mathbb{N}} $ denotes the starting time of a sampling sequence $ \sigma_k $ such that
	\begin{equation}
		\tau_{k+1}=\tau_{k}+l_k T,~ \tau_0=t_0=0  
	\end{equation} 
	
	A description of these sequences is depicted in Fig.~{\ref{InstantA}}. 
	\begin{figure}[!htbp]
		\centering
		\subfloat{%
			\resizebox*{8cm}{!}{\includegraphics{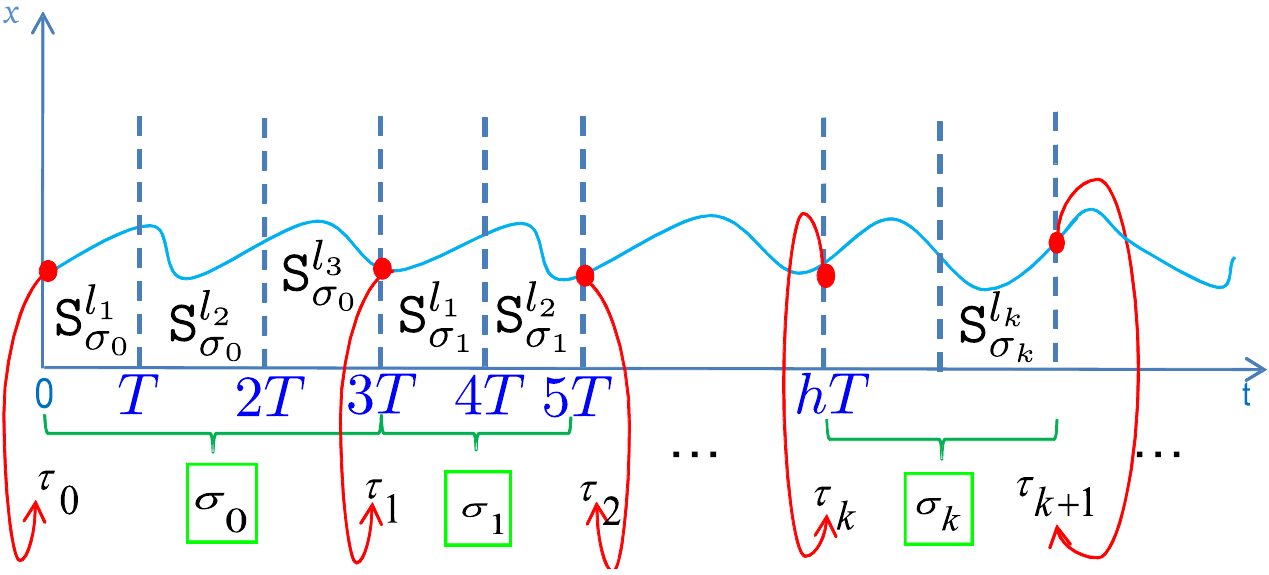}}}\hspace{2pt}
		\caption{ System state discretization using asynchronous measurement.} 	\label{InstantA}
	\end{figure}
	
\subsection{Motivational Example}

Consider a system of the form \eqref{dsyst} with a constant sampling interval $ T_h=T,~ \forall h \in \mathbb{N} $, and the following characteristics
\begin{equation}\label{exp1}
	A=\begin{bmatrix} 0 & 1\\-2 & 3 \end{bmatrix},~ B=\begin{bmatrix} 0 \\1 \end{bmatrix}, ~ K = \begin{bmatrix} 1 & -4 \end{bmatrix}.
\end{equation}

To clarify our motivation, let us look at the transition matrix $ {\tilde{A}}_{(2)}^4 \times {\tilde{A}}_{(1)}^1  \times  {\tilde{A}}_{(0)}^2$. According to the notation, $ {\tilde{A}}_{(i)}$  refers to the discretized collective form \eqref{discollective} which is obtained by using the sensor $ i^{th} $'s information in each constant period. 

Fig.~{\ref{MeigSys}} shows that the system stability and the average of sampling intervals for each sensor depend on the sampling interval $ T $ and a sensor is used in each sampling period. For the transition matrix $ {\tilde{A}}_{(2)}^4 \times {\tilde{A}}_{(1)}^1  \times  {\tilde{A}}_{(0)}^2$ and the constant sampling period $ T= 0.297s $, the average of sampling intervals for each sensor is $ 0.9334 $ which is considerably greater than $ T_{Schur}=0.59 $.  $ T_{Schur} $ is the maximum allowable periodic sampling interval that stabilize the system provided that all sensors' information is used at each sampling period.
\begin{figure}[!htpb] 
	\centering
	\subfloat{%
		\resizebox*{10cm}{!}{\includegraphics{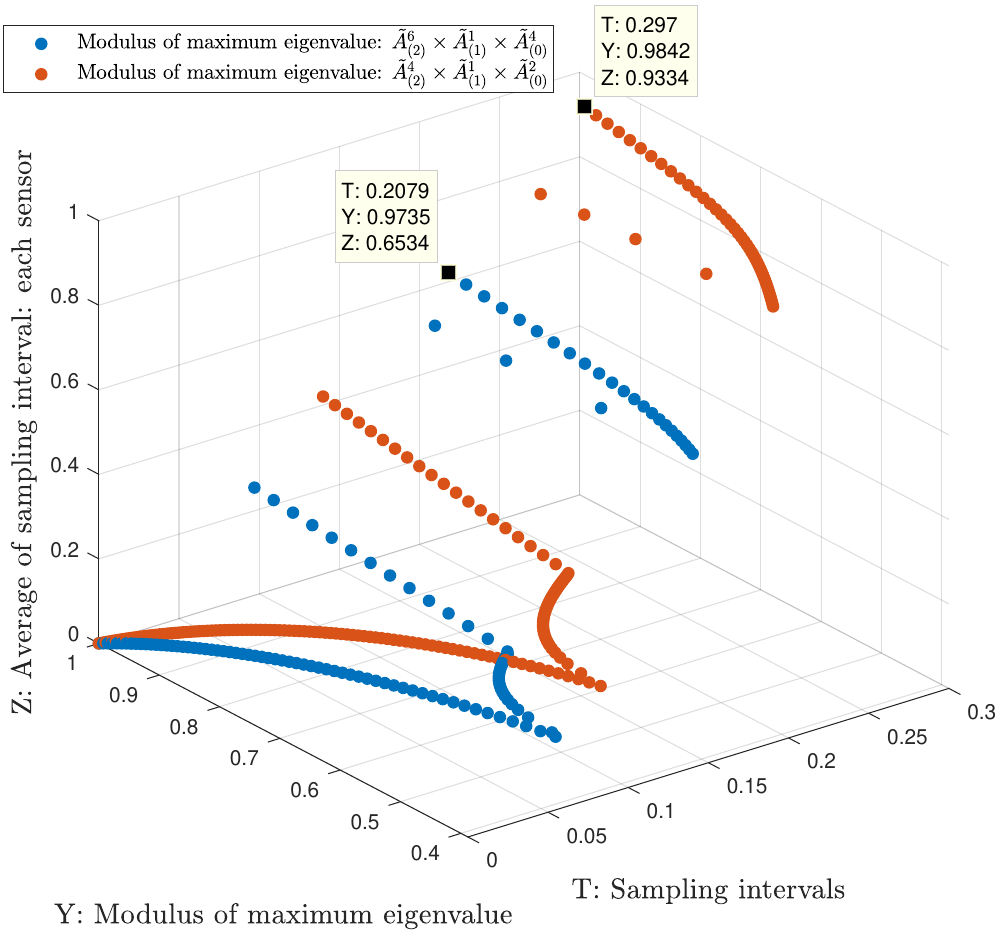}}}\hspace{2pt}
	\caption{ Evolution of the average of sampling intervals for each sensor with respect to modulus of the maximum eigenvalue and stabilizing sampling intervals  } 	\label{MeigSys}
\end{figure}

Motivated by this observation, this work aims at characterizing other sampling sequences that not only stabilize the system \eqref{dsyst} but also have a maximum average of sampling intervals for each sensor among other possible sequences based on the current state of the system. 

\subsection{Problem Statement}

Our aim is to calculate, for a current state $ x_k=x(\tau_k) $ at instant $ \tau_k $, the next sampling horizon $ \sigma_k \in S_{l_{min}}^{l_{max}}(\mathcal{S}) $ for a given $ l_{min}$ and  $l_{max} $ and $ \mathcal{S} $ which has a maximum possible average of sampling intervals for each sensor and  ensures the system's exponential stability.

\section{Main Results: Perturbation-free Case}\label{Unperturbed}
\subsection{Proposed Self Triggering Mechanism and Stability Analysis: Online case}\label{STM-off}
\subsubsection{Self Triggering Mechanism and Stability Analysis}
Proposition \ref{S_O_Df} ensures exponential stability of the discretized system \eqref{Hdsyst} in which the next sampling horizon $ \sigma_k $ defined by proposed self-triggering mechanism.
\begin{proposition}\label{S_O_Df}
	Consider a scalar $ \beta>0 $, a constant sampling interval $ T $, and a matrix $ P=P^T\succ0 $ such that
	\begin{equation} \label{matrixp}
		\begin{split}
			&\exists \sigma^* \in S^{l_{max}}_{l_{min}}(\mathcal{S}): \Phi^T_{\sigma^*} P \Phi_{\sigma^*} -e^{(-\beta |\sigma^* |T) }P\prec0
		\end{split} .
	\end{equation}
	
	Then, system \eqref{Hdsyst} with the self-triggering mechanism defined by
	
	\begin{equation} \label{STM-on-wp}
		\begin{split}
			& \sigma_k  \in \argmax_{{\sigma} \in {{\bar{S}}^{l_{max}}_{l_{min}}}(\mathcal{S},x_k)}\Bigg\{\frac{\sum_{j\in\{\mathtt{S}_{\sigma}^j = 0\}}1+|\sigma | }{|\mathcal{S}||\sigma |} \Bigg\},
		\end{split}
	\end{equation}	
	with
	\begin{equation}\label{es-wp}
		{{\bar{S}}^{l_{max}}_{l_{min}}}(\mathcal{S},x_k) =\Bigg\{\sigma \in S^{l_{max}}_{l_{min}}(\mathcal{S}): \zeta_k \leq 0 \Bigg\},
	\end{equation}
	is exponentially stable with the decay rate of $ \beta/2 $, where $ \zeta_k =  x^T_k \Big(\Phi^T_{\sigma} P \Phi_{\sigma} -e^{(-\beta |\sigma |T) }P  \Big)x_k$.
\end{proposition}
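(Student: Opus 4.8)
The plan is to use the quadratic form $V(x)=x^{T}Px$ as a common Lyapunov function for the discretized dynamics \eqref{Hdsyst} and to show that the self-triggering rule \eqref{STM-on-wp}--\eqref{es-wp} enforces a geometric decrease of $V$ along every realized horizon, at a rate governed by the elapsed physical time $\tau_k$ rather than by the horizon index $k$. The argument splits cleanly into a feasibility (recursive well-posedness) part and a decrease-and-telescope part.

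First I would verify that the feasible set $\bar{S}^{l_{max}}_{l_{min}}(\mathcal{S},x_k)$ is nonempty for every $x_k$, so that the $\argmax$ in \eqref{STM-on-wp} is actually defined. This is precisely where the standing hypothesis \eqref{matrixp} is used: since $\Phi^{T}_{\sigma^*}P\Phi_{\sigma^*}-e^{-\beta|\sigma^*|T}P\prec0$ is a \emph{strict} matrix inequality, the quadratic form $x_k^{T}\bigl(\Phi^{T}_{\sigma^*}P\Phi_{\sigma^*}-e^{-\beta|\sigma^*|T}P\bigr)x_k$ is strictly negative for $x_k\neq0$ and zero at the origin; in either case $\zeta_k\le0$ holds with $\sigma=\sigma^*$. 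Hence the fixed horizon $\sigma^*$ belongs to $\bar{S}^{l_{max}}_{l_{min}}(\mathcal{S},x_k)$ for \emph{every} $x_k$, so the feasible set is never empty and the selection is always admissible.

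The crucial observation is then that the constraint $\zeta_k\le0$ defining \eqref{es-wp} is exactly the one-step Lyapunov decrease condition read along a candidate horizon. For any selected $\sigma_k\in\bar{S}^{l_{max}}_{l_{min}}(\mathcal{S},x_k)$ we have $x_k^{T}\bigl(\Phi^{T}_{\sigma_k}P\Phi_{\sigma_k}-e^{-\beta|\sigma_k|T}P\bigr)x_k\le0$, which, using $x_{k+1}=\Phi_{\sigma_k}x_k$ from \eqref{Hdsyst}, rearranges to
\[
V(x_{k+1})=x_k^{T}\Phi^{T}_{\sigma_k}P\Phi_{\sigma_k}x_k\le e^{-\beta|\sigma_k|T}V(x_k).
\]
Note that the optimality objective in \eqref{STM-on-wp}, which maximizes the average inter-sample interval, is irrelevant to stability: the decrease follows from membership in the feasible set alone, so every maximizer inherits it. Iterating and writing $|\sigma_k|=l_k$ with $\tau_k=\sum_{i=0}^{k-1}l_iT$, a telescoping product gives $V(x_k)\le e^{-\beta\tau_k}V(x_0)$. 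Since $l_k\ge l_{min}\ge1$, each horizon advances the physical clock by at least $l_{min}T>0$, so $\tau_k\to\infty$ and the decay is genuine. Sandwiching with $\lambda_{\min}(P)\|x\|^{2}\le V(x)\le\lambda_{\max}(P)\|x\|^{2}$ then yields
\[
\|x_k\|\le\sqrt{\tfrac{\lambda_{\max}(P)}{\lambda_{\min}(P)}}\;e^{-(\beta/2)\tau_k}\,\|x_0\|,
\]
which is exactly exponential stability with decay rate $\beta/2$ in the physical time $\tau_k$.

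The main obstacle is conceptual rather than computational: one must recognize that the per-horizon certificate $\zeta_k\le0$ is a \emph{state-dependent relaxation} of the matrix inequality \eqref{matrixp}, and that recursive feasibility is guaranteed not by the (optimal) horizon actually chosen but by the persistent availability of the fixed fallback $\sigma^*$ at every state. Once this decoupling between feasibility (stability) and the optimization objective (performance) is in place, the telescoping estimate and the eigenvalue bounds on $P$ are routine.
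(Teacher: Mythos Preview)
Your proof is correct and follows essentially the same approach as the paper: establish nonemptiness of the feasible set via the fallback horizon $\sigma^*$ from \eqref{matrixp}, read the constraint $\zeta_k\le0$ as the one-step Lyapunov decrease $V(x_{k+1})\le e^{-\beta|\sigma_k|T}V(x_k)$, telescope to $V(x_k)\le e^{-\beta\tau_k}V(x_0)$, and conclude exponential stability with rate $\beta/2$. If anything, your write-up is slightly more explicit than the paper's, spelling out the eigenvalue sandwich and the fact that $l_k\ge l_{\min}\ge1$ forces $\tau_k\to\infty$.
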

\begin{proof}
	Consider a quadratic Lyapunov function $ V:\mathbb{R}^n  \to \mathbb{R}^+ $ defined by $ V(x_k)=x^T_k P x_k $ in which the matrix $ P $ satisfies \eqref{matrixp}. Then by construction, the set $ {{\bar{S}}^{l_{max}}_{l_{min}}}(\mathcal{S},x_k) $ defined by \eqref{es-wp} is not empty (i.e. $ \sigma^* \in {\bar{S}}^{l_{max}}_{l_{min}}(\mathcal{S},x_k),~ \forall x_k \in \mathbb{R}^n $). Therefore,  the arguments of the maxima in the equation \eqref{STM-on-wp} is well-defined and non-empty. Then $ \forall k \in \mathbb{N} $, by the equation \eqref{STM-on-wp} we can conclude that
	\begin{equation}
		\begin{split}
			V(x_{k+1})&= x_{k+1}^T P x_k = x^T_k \Phi^T_{{\sigma}_k} P\Phi_{{\sigma}_k} x_k \\
			& \leq  e^{(-\beta |\sigma_k |T) }x^T_k P x_k = e^{-\beta (\tau_{k+1}-\tau_k)} V(x_k).
		\end{split}
	\end{equation}
	Therefore, $ \forall k \in \mathbb{N} $, by recursion, we have  $ V(x_k) \leq e^{-\beta \tau_k} V(x_0) $, and according to \cite{khalil2002nonlinear}, system \eqref{Hdsyst} with self triggering control \eqref{STM-on-wp} is exponentially stable with the decay rate of $ \beta/2 $.
\end{proof}

\begin{remark}
	Self-triggering mechanism defined by \eqref{STM-on-wp} chooses a next optimal sequence from the set \eqref{es-wp} which contains all stable sequences such that the chosen sequence has a maximum average of sampling intervals among others.
\end{remark}

\subsubsection{Implementation Algorithm }
Algorithm \ref{Alg1} provides a self triggering mechanism for the sampled-data system \eqref{Hdsyst} which generates the next stabilizing and optimal sampling horizon $ {\sigma}_k $ based on the sampled state  $ x_k=x(\tau_k) $.

\begin{algorithm}
	\caption{Self Triggering Mechanism for Sampled-Data System \eqref{Hdsyst}: Online case}\label{Alg1}
	\begin{algorithmic}[H]
		\STATE
		\STATE \textbf{{OFFLINE Procedure:}}
		\begin{itemize}
			\STATE Define $ l_{min} $ and $ l_{max}  $, the minimum and maximum length of the considered sampling horizons.
			\STATE Define a constant sampling interval $ T $ and  a finite set $ \mathcal{S}=\{\mathtt{S}^1, \mathtt{S}^2,\cdots,\mathtt{S}^m\} $ of sensors' number.
			\STATE Design the set $ S^{l_{max}}_{l_{min}}(\mathcal{S}) $ of sampling horizons of length $l \in 	[l_{min},l_{max}] $ constituted of sensors in $ \mathcal{S} $.
			\STATE Compute the matrix $ P=P^T\succ0 $ for a given stable sequence $ \sigma^*\in S^{l_{max}}_{l_{min}}(\mathcal{S}) $ i.e. $ \Phi_{\sigma^*} $ is a Schur matrix satisfying: $ \Phi^T_{\sigma^*} P \Phi_{\sigma^*} -e^{(-\beta |\sigma^* |T)}P\prec0$.
		\end{itemize}
		\STATE
		\STATE\textbf{{ONLINE Procedure:} At each instant $ \tau_k $:}
		\STATE$ {T_{avg}}_{max}=\frac{\sum_{j\in\{\mathtt{S}_{\sigma^*}^j = 0\}}1+|\sigma^* | }{|\mathcal{S}||\sigma^* |} $
		\STATE $ {\sigma_{opt}}=\{\sigma^*\} $
		\FORALL{$ \sigma \subset S^{l_{max}}_{l_{min}}(\mathcal{S}) $:}
		\IF{$ x^T_k\Big(\Phi^T_{\sigma} P \Phi_{\sigma} -e^{(-\beta |\sigma |T)}P\Big)x_k < 0 $} 
		\STATE $ T_{avg}=\frac{\sum_{j\in\{\mathtt{S}_{\sigma}^j = 0\}}1+|\sigma | }{|\mathcal{S}||\sigma |} $
		\STATE $ {\sigma_{opt}}=\{\sigma\}$ 
		\IF {$ T_{avg} >{T_{avg}}_{max}$} 
		\STATE$ {T_{avg}}_{max}= T_{avg}$
		\STATE $ {\sigma_{opt}}=\{\sigma\} $
		\ELSIF {$ T_{avg} ={T_{avg}}_{max}$}
		\STATE $ {\sigma_{opt}}=\{\sigma_{opt},\sigma\} $ 		 
		\ENDIF		 
		\ENDIF		  	 
		\ENDFOR
		\STATE Choose randomly  $ \sigma_k$, from $ {\sigma_{opt}} $.	
	\end{algorithmic}
\end{algorithm}

\subsection{Proposed Self Triggering Mechanism and Stability Analysis: Offline case}\label{wp-online}
A main drawback of the algorithm which is introduced in the previous is its heavy computation load. In other words, at each instant $ \tau_k $, Algorithm \ref{Alg1} should be performed to find the solution of the optimization problem. Computationally speaking, it is not applicable in a real-time implementation. For example, Algorithm \ref{Alg1} needs to examine the set of all horizons to find stable sequences and accordingly all optimal sequences. Thus, the algorithm has a complexity  of $ \mathcal{O}\Big\{\Big|S_{l_{min}}^{l_{max}}(\mathcal{S})\Big|=\sum_{i=l_{min}}^{l_{max}} \Big|\mathcal{S}\Big|^i\Big\} $, which is not realistic for a real-time implementation.

In this section, an offline tractable method for partitioning of the state space into a finite number of conic regions $ \mathcal{R}_c,~c \in \{1,\cdots,N\} $, is proposed. The idea is to define a set $ \Psi_{(c)} $ of optimal sequences for each of these regions. For this purpose, we use conic covering technique proposed in \cite{fiter2012state}:

Consider  $ N \in \mathbb{N}$ conic regions $ \mathcal{R}_c$ such that
\begin{equation}\label{conicreg}
	\mathcal{R}_c=\Big\{x \in \mathbb{R}^n: x^T Q_c x \geq 0 \Big\} ,~\forall c \in \{1,\cdots,N\},
\end{equation}
where the matrices $ Q_c $ are designed as it is explained in \cite{fiter2012state}.

A partition of the state space based on this technique is shown in Fig.~{\ref{Region1}}. 

\begin{figure}[!htpb] 
	\centering
	\subfloat{%
		\resizebox*{8cm}{!}{\includegraphics{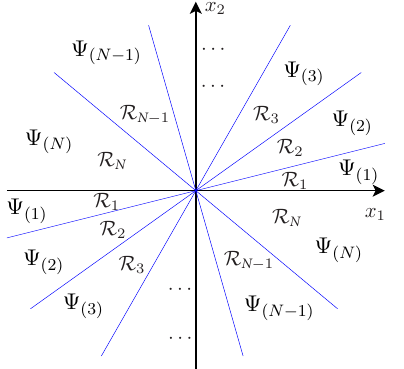}}}\hspace{2pt}
	\caption{ Covering the state space of dimension 2 with $ N $ conic regions $ \mathcal{R}_c, c \in \{1,\cdots,N\}  $} 	\label{Region1}
\end{figure}
\subsubsection{Self Triggering Mechanism and Stability Analysis}
Proposition \ref{STM_on} ensures exponential stability of the discretized system \eqref{Hdsyst} in which for a conic region $ \mathcal{R}_c $, the next sampling horizon $ \sigma^{(c)} $ defined by proposed self-triggering for each conic region $ \mathcal{R}_c,~c \in \{1,\cdots,N\} $. 

\begin{proposition}\label{STM_on}
	Consider a scalar $ \beta>0 $, a constant sampling interval $ T $, and a matrix $ P=P^T\succ0 $ such that	
	\begin{equation} \label{matrixp-off}
		\begin{split}
			& \exists \sigma^* \in S^{l_{max}}_{l_{min}}(\mathcal{S}): \Phi^T_{\sigma^*} P \Phi_{\sigma^*} -e^{(-\beta|\sigma^* |T) }P\prec0.
		\end{split} 
	\end{equation}
	
	Then the system \eqref{Hdsyst} with the self-triggering mechanism defined by:	
	\begin{equation} \label{STM-off-wp}
		\begin{split}
			& \sigma_{k}  \in \argmax_{\sigma \in \Psi_{(c)}:~ c \in \{1,\cdots,N\}~\text{and}~ x_k \in \mathcal{R}_c} \Bigg\{\frac{\sum_{j\in\{\mathtt{S}_{\sigma}^j = 0\}}1+|\sigma | }{|\mathcal{S}||\sigma |} \Bigg\},
		\end{split}
	\end{equation}
	with
	\begin{equation} 
		\begin{split}
			& \Psi_{(c)}  = \argmax_{{\sigma} \in {{\bar{S}}^{l_{max}}_{l_{min}}}(\mathcal{S},\mathcal{R}_c)}\Bigg\{ \frac{\sum_{j\in\{\mathtt{S}_{\sigma}^j = 0\}}1+|\sigma | }{|\mathcal{S}||\sigma |} \Bigg\},
		\end{split}
	\end{equation}
	where	
	\begin{equation}\label{es-off-wp}
\begin{split}
			{{\bar{S}}^{l_{max}}_{l_{min}}}(\mathcal{S},\mathcal{R}_c) =\bigg \{& \sigma \in S^{l_{max}}_{l_{min}}(\mathcal{S}) : \exists\epsilon_c>0 ,\\
			& \quad \Phi^T_{\sigma} P \Phi_{\sigma} -e^{(-\beta|\sigma |T) }P+\epsilon_c Q_c   \preceq 0 \bigg\},
\end{split}
	\end{equation}
	is exponentially stable with the decay rate of $ \beta/2 $.
\end{proposition}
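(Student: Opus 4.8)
The plan is to reduce Proposition \ref{STM_on} to the already-established online result (Proposition \ref{S_O_Df}) through an S-procedure argument, reusing the same quadratic Lyapunov function $V(x_k)=x_k^T P x_k$. The guiding observation is that the robust, state-\emph{independent} matrix inequality in \eqref{es-off-wp} implies, on the cone $\mathcal{R}_c$, the state-\emph{dependent} decrease condition $\zeta_k\le 0$ of \eqref{es-wp}. This implication is exactly what allows the feasibility test to be precomputed region-by-region rather than evaluated online at the current $x_k$.

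First I would verify that $\bar S^{l_{max}}_{l_{min}}(\mathcal{S},\mathcal{R}_c)$, and hence $\Psi_{(c)}$ and the argmax in \eqref{STM-off-wp}, are non-empty for every $c\in\{1,\dots,N\}$. Since \eqref{matrixp-off} provides $\Phi^T_{\sigma^*}P\Phi_{\sigma^*}-e^{-\beta|\sigma^*|T}P\prec 0$ \emph{strictly}, a continuity / eigenvalue-perturbation argument shows that adding $\epsilon_c Q_c$ with $\epsilon_c>0$ small enough keeps the sum negative definite, hence negative semidefinite; thus $\sigma^*\in\bar S^{l_{max}}_{l_{min}}(\mathcal{S},\mathcal{R}_c)$ for each $c$. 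Together with the fact that the cones $\mathcal{R}_c$ cover $\mathbb{R}^n$, this guarantees that every sampled state $x_k$ lies in at least one region possessing a well-defined optimal set.

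The core step is the S-procedure. Fix $c$ with $x_k\in\mathcal{R}_c$ and take any $\sigma\in\bar S^{l_{max}}_{l_{min}}(\mathcal{S},\mathcal{R}_c)$ with its associated $\epsilon_c>0$. Pre- and post-multiplying the matrix inequality \eqref{es-off-wp} by $x_k^T$ and $x_k$ gives
\[
x_k^T\!\left(\Phi^T_{\sigma}P\Phi_{\sigma}-e^{-\beta|\sigma|T}P\right)\!x_k\le-\epsilon_c\,x_k^TQ_c x_k .
\]
Because $x_k\in\mathcal{R}_c$ means $x_k^TQ_c x_k\ge 0$ and $\epsilon_c>0$, the right-hand side is nonpositive, so $\zeta_k\le 0$. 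Consequently every $\sigma\in\Psi_{(c)}\subseteq\bar S^{l_{max}}_{l_{min}}(\mathcal{S},\mathcal{R}_c)$ — and in particular the $\sigma_k$ selected by \eqref{STM-off-wp} — satisfies the one-step decrease condition at the current state.

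It then remains to replay the recursion from Proposition \ref{S_O_Df}: $\zeta_k\le 0$ yields $V(x_{k+1})\le e^{-\beta|\sigma_k|T}V(x_k)=e^{-\beta(\tau_{k+1}-\tau_k)}V(x_k)$, and by induction $V(x_k)\le e^{-\beta\tau_k}V(x_0)$, whence exponential stability with decay rate $\beta/2$ follows from the standard estimate $\lambda_{\min}(P)\|x_k\|^2\le V(x_k)$. I expect the only genuine obstacle to be the \emph{simultaneous} well-posedness across all regions: one must ensure that the single offline-computed $P$ certifying $\sigma^*$ also certifies at least one robust sequence in every cone of the partition, which is precisely where the strictness of \eqref{matrixp-off} (as opposed to a non-strict inequality) is essential. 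The S-procedure step itself uses only the easy sufficiency direction and is routine.
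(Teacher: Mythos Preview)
Your proposal is correct and follows essentially the same route as the paper's proof: the same quadratic Lyapunov function, the same S-procedure (sufficiency) step of pre/post-multiplying the region-wise LMI by $x_k$ and using $x_k^TQ_cx_k\ge 0$ to obtain $\zeta_k\le 0$, and the same recursion to conclude exponential decay at rate $\beta/2$. Your non-emptiness argument via the strictness of \eqref{matrixp-off} and an eigenvalue-perturbation bound is in fact slightly more explicit than the paper's one-line ``by construction'' justification, but the underlying idea is identical.
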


\begin{proof}
	Consider a quadratic Lyapunov function $ V:\mathbb{R}^n  \to \mathbb{R}^+ $ defined by $ V(x_k)=x^T_k P x_k $ in which the matrix $ P $ satisfies \eqref{matrixp-off}. Then by construction, the set $ {{\bar{S}}^{l_{max}}_{l_{min}}}(\mathcal{S},\mathcal{R}_c) $ defined by \eqref{es-off-wp} is not empty (i.e. $ \sigma^* \in {\bar{S}}^{l_{max}}_{l_{min}}(\mathcal{S},x_k),~ \forall x_k \in \mathbb{R}^n $). Therefore,  the arguments of the maxima in the equation \eqref{STM-off-wp} is well-defined and non-empty. Let $ x_k \in \mathbb{R}^n, k\in \mathbb{N} $, There exists a conic region $ \mathcal{R}_c $ as in \eqref{conicreg} such that $ x_k \in  \mathcal{R}_c$. Consider the sampling horizon $ \sigma_k $ as defined in \eqref{STM-off-wp}. Therefore, we have $ x^T_{k} \Big(\Phi^T_{\sigma_k} P \Phi_{\sigma_k} -e^{(-\beta|\sigma_k |T) }P  \Big)x_{k} \leq x^T_{k} \Big(\Phi^T_{\sigma_k} P \Phi_{\sigma_k} -e^{(-\beta|\sigma_k |T) }P +\epsilon_c Q_c \Big)x_{k} \leq 0$, since by construction $ x_k\in \mathcal{R}_c $ and $ \sigma_k \in {{\bar{S}}^{l_{max}}_{l_{min}}}(\mathcal{S},\mathcal{R}_c) $. Then, one can conclude that
	
	\begin{equation}
		\begin{split}
			V(x_{k+1})&=x^T_{k+1}Px^T_{k+1}=x^T_k \Phi^T_{{\sigma}_k} P\Phi_{{\sigma}_k} x_k \\
			& \leq  e^{(-\beta|\sigma |T) }x^T_k P x_k= e^{-\beta (\tau_{k+1}-\tau_{k})} V(x_k).
		\end{split}
	\end{equation}
	Therefore, $ \forall k\in\mathbb{N} $, by recursion, we have $ V(x_k) \leq e^{-\beta \tau_k} V(x_0) $ and according to \cite{khalil2002nonlinear}, system \eqref{Hdsyst} with self triggering control \eqref{STM-off-wp} is exponentially stable with the decay rate of $ \beta/2 $.  
\end{proof}

\begin{remark}
	This should be noted that in this theorem, the larger the number of regions by which the state space is partitioned, the greater the performance the system will have.
\end{remark}

\subsubsection{Implementation Algorithm}
Algorithm \ref{Alg2} provides a self triggering algorithm for the sampled-date system \eqref{Hdsyst} which guarantees exponential stability and generates the next stabilizing optimal sampling horizon $ {\sigma} \in \Psi_{(c)} $ based on $ x(\tau_k) \in \mathcal{R}_c $ such that the selected sampling horizon has a maximum average of the sampling intervals:

\begin{algorithm}
	\caption{Self Triggering Mechanism for Sampled-Data System \eqref{Hdsyst}: Offline case}\label{Alg2}
	\begin{algorithmic}[H]
		\STATE
		\STATE \textbf{{OFFLINE Procedure:}}
		\begin{itemize}
			\STATE Define $ N $, the number of conic regions.
			\STATE Define $ l_{min} $ and $ l_{max}  $, the minimum and maximum length of the considered sampling horizons.
			\STATE Define a constant sampling interval $ T $ and  a finite set $ \mathcal{S}=\{\mathtt{S}^1, \mathtt{S}^2,\cdots,\mathtt{S}^m\} $ of sensors' number.
			\STATE Design the set $ S^{l_{max}}_{l_{min}}(\mathcal{S}) $ of sampling horizons of length $l \in 	[l_{min},l_{max}] $ constituted of sensors in $ \mathcal{S} $.
			\STATE Compute the matrix $ P=P^T\succ0 $ for a given stable sequence $ \sigma^*\in S^{l_{max}}_{l_{min}}(\mathcal{S}) $ i.e. $ \Phi_{\sigma^*} $ is a Schur matrix satisfying: $ \Phi^T_{\sigma^*} P \Phi_{\sigma^*} -e^{(-\beta|\sigma^* |T) }P\prec0$.
			\STATE Design a matrix $ Q_c $ for each conic region $ \mathcal{R}_c=\big\{x \in \mathbb{R}^n: x^T Q_c x \geq 0 \big\},~\forall c \in \{1,\cdots,N\}  $.
			\STATE  Compute the set $ \Psi_{(c)} $ of optimal sampling horizons for each conic region $ \mathcal{R}_c $:
		\end{itemize}
		
		{\renewcommand\labelitemi{}
			\begin{itemize}	
				\FORALL {$c=1:N$} 
				\STATE $ {T_{avg}}_{max}=\frac{\sum_{j\in\{\mathtt{S}_{\sigma^*}^j = 0\}}1+|\sigma^* | }{|\mathcal{S}||\sigma^* |} $
				\STATE $ {\Psi^{opt}_{(c)}}=\{\sigma^*\} $  
				\FORALL{$ \sigma \subset S^{l_{max}}_{l_{min}}(\mathcal{S}) $:}
				\IF{$ \Phi^T_{\sigma} P \Phi_{\sigma} -e^{(-\beta|\sigma |T) }P+\epsilon_c Q_c\prec0 $} 
				\STATE $ T_{avg}=\frac{\sum_{j\in\{\mathtt{S}_{\sigma}^j = 0\}}1+|\sigma | }{|\mathcal{S}||\sigma |}$ 
				\IF {$ T_{avg} > {T_{avg}}_{max}$} 
				\STATE $ {T_{avg}}_{max}= T_{avg}$
				\STATE $ {\Psi^{opt}_{(c)}}=\{\sigma\} $
				\ELSIF {$ T_{avg} ={T_{avg}}_{max}$}
				\STATE $ {\Psi^{opt}_{(c)}}=\{\Psi^{opt}_{(c)},\sigma\} $
				\ENDIF
				\ENDIF	 	 
				\ENDFOR
				\STATE $ \Psi_{(c)}={\Psi^{opt}_{(c)}} $ 
				\ENDFOR
		\end{itemize}}
		\STATE
		\STATE\textbf{{ONLINE Procedure:} At each instant $ \tau_k $:}
		\begin{itemize}
			\STATE Determine the conic region $ \mathcal{R}_c,~c \in \{1,\cdots,N\} $, in which current state $ x_k $ belongs.
			\STATE Select randomly a corresponding optimal horizon $ \sigma_k $ from the sets $ \Psi_{(c)},~c \in \{1,\cdots,N\} $, for which $ x_k \in \mathcal{R}_c $. 
		\end{itemize}
	\end{algorithmic}
\end{algorithm}

\section{Main Results: Perturbed Case}\label{Perturbed}
In this section, a self-triggered mechanism is introduced which maximizes the average of next sampling intervals regarding given set of user-defined inter-sampling periods for a continouse time LTI system under bounded perturbation. 
\subsection{System Reformulation}\label{sysref}
We consider a perturbed continuous time LTI system
\begin{equation} \label{systP}
	\begin{split}
		\dot{x}(t)&=Ax(t)+Bu(t)+Dw(t),\\
		t& \geq 0,~ x \in \mathbb{R}^n, ~ u \in \mathbb{R}^m, ~ w \in \mathbb{R}^{n_w}, 
	\end{split}
\end{equation}
where $ w(t) $ is a bounded perturbation. As before, we suppose that the state $ x(t) $ of the system \eqref{systP} is sampled at instants $ t_h $ with $ h \in \mathbb{N}  $ and the sampled-date sate-feedback controller satisfies \eqref{samplint} and \eqref{statefc}. Then, the discrete model of the system at sampling instants $ t_h,~h \in \mathbb{N} $ is given by
\begin{equation}\label{AdsystP}
	\begin{split}
		x(t_h+T)=A_{(T)}x(t_h)+B_{(T)} K \hat{x}(t_h)+\tilde{w}_{(T)}(t_h),
	\end{split}
\end{equation}
with 
\begin{equation}\label{dist_disc}
\begin{split}
		A_{(T)} &= e^{AT},\\
	B_{(T)} &=\int\nolimits_{0}^{T}e^{As}Bds,~ \tilde{w}_{(T)}(t_h)=  \int\nolimits_{0}^{T}e^{As}Dw(t_h+s) ds.
\end{split}
\end{equation}

\begin{assumption}\label{pertbounded}
	We suppose that the perturbation is bounded: 
	
	\begin{equation}
		\Big\| \tilde{w}_{(T)}(t_h)\Big\|_2 \leq \varpi,~ h\in \mathbb{N}.
	\end{equation} 
\end{assumption}

According to the notation used in the section \ref{sys-refor} and after first step, one can obtain, 

\begin{equation}\label{1stdist}
	\begin{split}
		x(t_{h+1}) &={A}_{(T)}x(t_h)+{B}_{(T)} K \hat{x}(t_h)+\tilde{w}_{(T)}(t_h)
	\end{split}
\end{equation}

In collective form, the \eqref{1stdist} can be written as 
\begin{equation}\label{1stSysC_dist}
	\eta_{h+1}=\tilde{A}_{(\mathtt{S}_h^1)}\eta_{h}+
	\begin{bmatrix}
		\tilde{w}_{(T)}(t_h)\\
		\textbf{0}
	\end{bmatrix}
	,~ \forall t_h=hT,~h\in \mathbb{N},
\end{equation}
with
\begin{equation*}
	\eta_{h+1}=[x(t_{h+1}) ~ \hat{x}(t_{h})]^T,~\eta_h=[x(t_h) ~ \hat{x}(t_{h-1})]^T,
\end{equation*} 
and
\begin{equation}\label{1stdiscollective_dist}
	\tilde{A}_{(\mathtt{S}_h^1)}= \begin{bmatrix}
		{A}_{(T)}+B_{(T)} K M(t_h) & B_{(T)} K  N(t_h)\\
		M(t_h) & N(t_h)
	\end{bmatrix}.
\end{equation}

After second step

\begin{equation}\label{2nddist}
	\begin{split}
		x(t_h+2T) &={A}_{(T)}x\Big(t_h+T\Big)+{B}_{(T)} K \hat{x}\Big(t_h+T\Big) \\
		& \quad +\tilde{w}_{(T)}\Big(t_h+T\Big)\\
		&={A}_{(T)}x\Big(t_h+T\Big)\\
		&\quad +{B}_{(T)} K M\Big(t_h+T\Big)x\Big(t_h+T\Big) \\
		& \quad +{B}_{(T)} K N\Big(t_h+T\Big)\hat{x}(t_h) +\tilde{w}_{(T)}\Big(t_h+T\Big)
	\end{split}
\end{equation}

In collective form, the \eqref{2nddist} can be written as 
\begin{equation}\label{2ndSysC_dist}
	\eta_{h+2}=\tilde{A}_{(\mathtt{S}_h^2)}\eta_{h+1}+
	\begin{bmatrix}
		\tilde{w}_{(T)}\Big(t_h+T\Big)\\
		\textbf{0}
	\end{bmatrix}
	,~ \forall t_h=hT,~h\in \mathbb{N},
\end{equation}
with
\begin{equation*}
	\eta_{h+2}=[x(t_{h+2}) ~ \hat{x}(t_{h+1})]^T,~\eta_h=[x(t_{h+1}) ~ \hat{x}(t_{h})]^T,
\end{equation*} 
and
\begin{equation}\label{2nddiscollective_dist}
	\tilde{A}_{(\mathtt{S}_h^2)}= \begin{bmatrix}
		{A}_{(T)}+B_{(T)} K M\Big(t_h+T\Big) & B_{(T)} K  N\Big(t_h+T\Big)\\
		M\Big(t_h+T\Big) & N\Big(t_h+T\Big)
	\end{bmatrix}.
\end{equation}

Combining equations \eqref{1stSysC_dist} and \eqref{2ndSysC_dist} yields
\begin{equation}\label{2nd_dist}
\begin{split}
		\eta_{h+2} &=\tilde{A}_{(\mathtt{S}_h^2)}\tilde{A}_{(\mathtt{S}_h^1)}\eta_{h} \\
		& \quad +\tilde{A}_{(\mathtt{S}_h^2)}\begin{bmatrix}
		\tilde{w}_{(T)}(t_h)\\
		\textbf{0}
	\end{bmatrix}+
	\begin{bmatrix}
		\tilde{w}_{(T)}\Big(t_h+T\Big)\\
		\textbf{0}
	\end{bmatrix},\\
	\forall t_h& =hT,~h\in \mathbb{N},
\end{split}
\end{equation}

After $ l_h \geq 1 $ steps, we have

\begin{equation}
	\begin{split}
		\eta(t_h+l_hT)	&=\Bigg(\prod_{s=1}^{l_h}{\tilde{A}}_{(\mathtt{S}_h^s)} \Bigg) \eta(t_h) \\
		& \quad +\Bigg(\sum_{i=0}^{l_h-1}\Bigg(\prod_{j=i+2}^{l_h}{\tilde{A}}_{(\mathtt{S}_h^j)}\Bigg)\begin{bmatrix}
			\tilde{w}_{(T)}\Big(t_h+iT\Big)\\
			\textbf{0}
		\end{bmatrix}\Bigg)
	\end{split}
\end{equation}
where 

\begin{equation*}
	\tilde{A}_{(\mathtt{S}_h^s)} = \begin{bmatrix}
		\begin{aligned}
			A_{(T)} + B_{(T)} K \times \\
			M\big(t_h + (s-1)T\big)
		\end{aligned}
		& \begin{aligned}
			B_{(T)} K \times \\
			N\big(t_h + (s-1)T\big)
		\end{aligned} \\[4ex]
		M\big(t_h + (s-1)T\big) & N\big(t_h + (s-1)T\big)
	\end{bmatrix}.
\end{equation*}

The representation of system \eqref{AdsystP} over a sampling horizon  $\sigma_k $ is given by
\begin{equation}\label{CLSP}
	\begin{split}
		{\eta}{_{k+1}} &=\Phi_{\sigma_k} \eta_k \\
		&\quad +\Bigg(\sum_{i=0}^{|\sigma_k|-1}\Bigg(\prod_{j=i+2}^{|\sigma_k|}{\tilde{A}}_{(\mathtt{S}_{\sigma_k}^j)}\Bigg)\begin{bmatrix}
			\tilde{w}_{(T)}\Big(\tau_k+iT\Big)\\
			\textbf{0}
		\end{bmatrix} \Bigg) \\
		&\quad =\Phi_{\sigma_k} \eta_k+\bar{w}_k,
	\end{split}
\end{equation}
in which $ x_k=x(\tau_k) $ and the transition matrix corresponding to the sampling horizon $ \sigma_k $, from instant $ \tau_k $ to instant $ \tau_{k+1} $ is given by 
\begin{equation}
	\Phi_{\sigma_k}={\tilde{A}}_{(\mathtt{S}^{l_k}_{\sigma_k})}{\tilde{A}}_{(\mathtt{S}^{l_{k-1}}_{\sigma_k})}\cdots {\tilde{A}}_{(\mathtt{S}^{l_{1}}_{\sigma_k})}
\end{equation} 
where 
\begin{equation}
	\tau_{k+1}=\tau_{k}+l_{k}T.
\end{equation} 
Since  $ {\tilde{A}}_{(\mathtt{S}_{\sigma_k}^s)},~\forall s,k \in \mathbb{N}  $, is a bounded operator and the set $ \mathcal{S}  $ of considered sampling intervals is finite, there exists a constant $ C $ such that 
\begin{equation}
	C=\max_{\mathtt{S}_{\sigma_k}^s \in \mathcal{S}} \Big(\Big\|{\tilde{A}}_{(\mathtt{S}_{\sigma_k}^s)}\Big\|_2\Big),~\forall k,s \in \mathbb{N},
\end{equation}
According to Assumption \ref{pertbounded},  one can conclude that 

\begin{equation}
\begin{aligned}
		\Big\|\bar{w}_k\Big\|_2 & = \Bigg\|\Bigg(\sum_{i=0}^{|\sigma_k|-1}\Bigg(\prod_{j=i+2}^{|\sigma_k|}{\tilde{A}}_{(\mathtt{S}_{\sigma_k}^j)}\Bigg)\begin{bmatrix}
		\tilde{w}_{(T)}\Big(\tau_k+iT\Big)\\
		\textbf{0}
	\end{bmatrix} \Bigg)\Bigg\|_2 \\
	& \leq \varpi \sum_{q=0}^{|\sigma_k|-1}C^q .
\end{aligned}
\end{equation}
\begin{definition}\cite{khalil2002nonlinear}
	Consider system 
	\begin{equation}\label{exbound}
		\dot{x}=f(x) 
	\end{equation} 
	where $ f:\mathbb{R}^n \to \mathbb{R}^n $ is locally Lipschitz in $ x $. 
	
	The solution of  \eqref{exbound} is Globally Uniformly Ultimately Bounded (GUUB) with ultimate bound $ b>0 $ independent of $ t_0 \geq 0 $, and for every arbitrary large $ a > 0 $, there is $ T=T(a,b)\geq 0 $, independent of $ t_0 $, such that
	\begin{equation}\label{exbdef}
		\Big\|x(t_0) \Big\| \leq a  \Rightarrow \Big\|x(t)\Big\| \leq b, \quad \forall t \geq t_0+T
	\end{equation}
\end{definition}\label{defUBB}

\subsection{Problem Statement}
Consider closed-loop system \eqref{CLSP}, our aim is to calculate for $ x_k $ at instant $ \tau_k $, the next optimal sampling horizon $ \sigma_k $ which will be applied to the sampling mechanism in order to ensure global uniform ultimate boundedness of the system's solution.

\subsection{Proposed self triggering mechanism and Stability Analysis: Online case}\label{STM-off-p}

\subsubsection{Self Triggering Mechanism and Stability Analysis}
Proposition \ref{STMS-off-p} ensures global uniform ultimate boundedness of the discretized system \eqref{CLSP} in which the next sampling horizon $ \sigma_k $ defined by proposed self-triggering mechanism.

In the rest of the paper, we use the notation $ \mathcal{E}(P,\upsilon)$ to refer to the ellipsoid
\begin{equation}
	\mathcal{E}(P,\upsilon)=\bigg\{x \in \mathbb{R}^n: x^T Px \leq  \upsilon	\bigg\}.
\end{equation}

For a positive definite matrix $ P \in \mathbb{R}^{n \times n} $ and a positive scalar $ \upsilon $. For all positive scalar $ r $, we denote by $ \mathcal{B}(0,r) $ the ball of radius $ \sqrt{r} $:
\begin{equation}\label{ball}
	\mathcal{B}(0,r)=\bigg\{x \in \mathbb{R}^n: 
	\Big\| x \Big\|^2_2 \leq r	\bigg\}.
\end{equation}
\begin{proposition}\label{STMS-off-p}
	Consider a constant sampling period $ T $, scalars $ \beta>0$ and $\gamma>0 $. If there exists $ \sigma^* \in S^{l_{max}}_{l_{min}}(\mathcal{S}) $ and  symmetric positive definite matrices $ P,~M$ such that 
	
	\begin{equation} \label{LMI1-STM-on-p}
		\begin{split}	
			&\Phi_{\sigma^*}^T(P+M)\Phi_{\sigma^*} +\Big(e^{(-\beta|{\sigma^*} |T) }-\gamma\Big)P \preceq 0,
		\end{split} 
	\end{equation}
	and
	\begin{equation} \label{LMI2-STM-on-p}
		\begin{bmatrix} M &  P\\ P &  \frac{\gamma}{\chi}I- P \end{bmatrix} \succeq 0, 
	\end{equation}
	are verified with $ \chi =\Big(\varpi  \sum_{q=0}^{|\sigma^*|-1}C^q \Big)^2 $, then the system \eqref{CLSP} is GUUB with the self-triggering mechanism defined by
	

\begin{equation}\label{STM-on-p}
	\sigma_k \in
	\begin{cases}
		\begin{multlined}
			\argmax_{\sigma \in {\bar{S}}^{l_{max}}_{l_{min}}(\mathcal{S},x_k)} \\
			\left\{ \frac{\sum_{j\in\{\mathtt{S}_{\sigma}^j = 0\}}1+|\sigma | }{|\mathcal{S}||\sigma |} \right\},
		\end{multlined}
		& \text{If } x_k \notin \mathcal{E}(P,1), \\
		\Big\{\Big(0\Big)\Big\}, & \text{If } x_k \in \mathcal{E}(P,1).
	\end{cases}
\end{equation}

	with
	
	\begin{equation}\label{es-on-p}
		\begin{split}
			&{{\bar{S}}^{l_{max}}_{l_{min}}}(\mathcal{S},x_k) =\Bigg \{\sigma \in S^{l_{max}}_{l_{min}}(\mathcal{S});~\begin{pmatrix}
				x_k \\
				1
			\end{pmatrix}^T U_{\sigma} \begin{pmatrix}
				x_k \\
				1
			\end{pmatrix} \geq 0 \Bigg \},
		\end{split}
	\end{equation}
	\begin{equation} \label{U_sigma}
		\begin{split}
			U_{\sigma}=\text{diag}\Bigg(-\Phi^T_{{\sigma}} \Big(P+M\Big)\Phi_{{\sigma}}+\Big(e^{(-\beta|{\sigma} |T) }-\gamma\Big) P \\
			,-\Big(\varpi  \sum_{q=0}^{|\sigma|-1}C^q \Big)^2\lambda_{max}(P M^{-1}P+P)+\gamma  \Bigg),
		\end{split}
	\end{equation}
	Furthermore, the solution of \eqref{CLSP} converges and remains in $ \mathcal{E}(P,\mu) $, with 
	\begin{equation}
		\mu= \lambda_{max}(P) \Big(\frac{C'}{\lambda_{min}(P)}+ \varpi\Big)^2 ,
	\end{equation} 
	where $ C'=\Big\| \tilde{A}_{(0)} \Big\|_2 $  and $ \varpi $ is given in Assumption \ref{pertbounded}.	  
\end{proposition}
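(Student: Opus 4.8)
The plan is to treat $V(x_k)=x_k^{T}Px_k$ as an ISS-type Lyapunov function for the closed loop \eqref{CLSP}, $x_{k+1}=\Phi_{\sigma_k}x_k+\bar w_k$, and to show it strictly decreases outside $\mathcal E(P,1)$ and stays bounded inside. First I would expand $V(x_{k+1})=x_k^{T}\Phi_{\sigma_k}^{T}P\Phi_{\sigma_k}x_k+2x_k^{T}\Phi_{\sigma_k}^{T}P\bar w_k+\bar w_k^{T}P\bar w_k$ and dominate the cross term by completing the square with weight $M$, namely $2x_k^{T}\Phi_{\sigma_k}^{T}P\bar w_k\le x_k^{T}\Phi_{\sigma_k}^{T}M\Phi_{\sigma_k}x_k+\bar w_k^{T}PM^{-1}P\bar w_k$. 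This gives $V(x_{k+1})\le x_k^{T}\Phi_{\sigma_k}^{T}(P+M)\Phi_{\sigma_k}x_k+\bar w_k^{T}(PM^{-1}P+P)\bar w_k$, and bounding the last term by $\lambda_{max}(PM^{-1}P+P)\,\|\bar w_k\|_2^{2}\le\lambda_{max}(PM^{-1}P+P)\big(\varpi\sum_{q=0}^{|\sigma_k|-1}C^{q}\big)^{2}$ using the perturbation estimate already derived for \eqref{CLSP}. This explains why $M$ enters through $\Phi^{T}(P+M)\Phi$ and why the coefficient $\lambda_{max}(PM^{-1}P+P)$ appears in $U_{\sigma}$ in \eqref{U_sigma}.

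Next I would read off the triggering law: any $\sigma_k\in\bar S^{l_{max}}_{l_{min}}(\mathcal S,x_k)$ satisfies $\binom{x_k}{1}^{T}U_{\sigma_k}\binom{x_k}{1}\ge0$, which by the block-diagonal form of $U_{\sigma_k}$ is exactly $x_k^{T}\Phi_{\sigma_k}^{T}(P+M)\Phi_{\sigma_k}x_k\le(e^{-\beta|\sigma_k|T}-\gamma)\,x_k^{T}Px_k+\gamma-\lambda_{max}(PM^{-1}P+P)\big(\varpi\sum_{q=0}^{|\sigma_k|-1}C^{q}\big)^{2}$. Substituting this into the bound of the previous paragraph cancels the perturbation coefficient and yields the key one-step contraction $V(x_{k+1})\le e^{-\beta|\sigma_k|T}V(x_k)-\gamma\big(V(x_k)-1\big)$. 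This step uses nothing about which maximizer is selected, so it holds for every feasible $\sigma_k$; the average-interval objective in \eqref{STM-on-p} is therefore free to optimize within the stabilizing set $\bar S^{l_{max}}_{l_{min}}(\mathcal S,x_k)$ defined by \eqref{es-on-p}.

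The part I expect to be the main obstacle is well-definedness of the mechanism, i.e. showing $\bar S^{l_{max}}_{l_{min}}(\mathcal S,x_k)\ne\emptyset$ (so that the $\argmax$ exists) for every $x_k\notin\mathcal E(P,1)$. Here both hypotheses enter through $\sigma^*$: I would argue that \eqref{LMI1-STM-on-p} certifies the sign of the leading $n\times n$ block $U_{11}^{\sigma^*}=-\Phi_{\sigma^*}^{T}(P+M)\Phi_{\sigma^*}+(e^{-\beta|\sigma^*|T}-\gamma)P$, while a Schur-complement rewriting of \eqref{LMI2-STM-on-p} gives $\tfrac{\gamma}{\chi}I\succeq P+PM^{-1}P$, that is $\lambda_{max}(PM^{-1}P+P)\big(\varpi\sum_{q=0}^{|\sigma^*|-1}C^{q}\big)^{2}\le\gamma$, which is precisely the scalar block $U_{22}^{\sigma^*}\ge0$. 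Together these should make $U_{\sigma^*}\succeq0$, hence $\binom{x_k}{1}^{T}U_{\sigma^*}\binom{x_k}{1}\ge0$ for all $x_k$, so $\sigma^*\in\bar S^{l_{max}}_{l_{min}}(\mathcal S,x_k)$ and the feasible set is nonempty. Reconciling the sign conventions of \eqref{LMI1-STM-on-p} with the $U_{11}$ block, so that genuine positive semidefiniteness (not merely the contraction for $\sigma^*$) is obtained, is the delicate bookkeeping I would verify most carefully.

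Finally I would assemble GUUB. Outside $\mathcal E(P,1)$ we have $V(x_k)>1$, so the contraction gives $V(x_{k+1})-V(x_k)\le(e^{-\beta|\sigma_k|T}-1)V(x_k)-\gamma(V(x_k)-1)<0$, with a decrement bounded away from zero while $V(x_k)$ stays away from $1$; a standard comparison/recursion argument then drives the trajectory into $\mathcal E(P,1)$ in finitely many steps. Inside $\mathcal E(P,1)$ the law applies the idle horizon $\sigma_k=(0)$, so $x_{k+1}=\tilde A_{(0)}x_k+\bar w_k$ with $\|\bar w_k\|_2\le\varpi$ (single step, $\sum_{q=0}^{0}C^{q}=1$); bounding $\|x_{k+1}\|_2\le C'\|x_k\|_2+\varpi$ on $\mathcal E(P,1)$, where $C'=\|\tilde A_{(0)}\|_2$, and passing back to $V$ through $V(x_{k+1})\le\lambda_{max}(P)\|x_{k+1}\|_2^{2}$ after estimating $\|x_k\|_2$ on $\mathcal E(P,1)$ via $\lambda_{min}(P)$, yields the stated absorbing ellipsoid $\mathcal E(P,\mu)$ with $\mu=\lambda_{max}(P)\big(C'/\lambda_{min}(P)+\varpi\big)^{2}$. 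Showing that $\mathcal E(P,\mu)$ is invariant under both operating modes and that every trajectory enters it in finite time, then invoking the GUUB definition with ultimate bound $b$ given by $\mathcal E(P,\mu)$ and arbitrary initial radius $a$, completes the proof.
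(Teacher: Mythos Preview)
Your proposal follows essentially the same route as the paper's proof: the same quadratic Lyapunov function $V(x)=x^TPx$, the same completion-of-squares bound $2x_k^T\Phi_{\sigma_k}^TP\bar w_k\le x_k^T\Phi_{\sigma_k}^TM\Phi_{\sigma_k}x_k+\bar w_k^TPM^{-1}P\bar w_k$ to absorb the cross term, the same Schur-complement reading of \eqref{LMI2-STM-on-p} to obtain $\gamma/\chi\ge\lambda_{\max}(PM^{-1}P+P)$ for nonemptiness via $\sigma^*$, the same case split on $x_k\in\mathcal E(P,1)$ versus $x_k\notin\mathcal E(P,1)$, and the same crude bound $\|x_{k+1}\|_2\le C'\|x_k\|_2+\varpi$ in Case~(ii) to produce $\mu$. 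Your flagged ``delicate bookkeeping'' about the sign of the $U_{11}^{\sigma^*}$ block is well placed---the paper's own text states \eqref{LMI1-STM-on-p} with inconsistent signs in different places and simply asserts the needed inequality in its proof---but the overall architecture is identical, and if anything your treatment of finite-time entry into $\mathcal E(P,1)$ and invariance of $\mathcal E(P,\mu)$ is more explicit than the paper's.
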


\begin{proof}
	See Appendix.
\end{proof}

\begin{remark}
	To obtain the smallest ellipsoid $ \mathcal{E}(P,\mu) $ by which the solution of \eqref{CLSP} is bounded, one can consider a ball $ \mathcal{B}(0,\psi) $ tangent to the ellipsoid $ \mathcal{E}(P,\mu) $ (See Fig.~{\ref{MatP2}}) such that if there exists $ \sigma^* \in S^{l_{max}}_{l_{min}}(\mathcal{S}) $ and  symmetric positive definite matrices $ P,~M$ such that 
	\begin{equation}\label{LMI1_Per_On} 
		\begin{split}	
			&\Phi_{\sigma^*}^T(P+M)\Phi_{\sigma^*} +\Big(e^{(-\beta|{\sigma^*} |T) }-\gamma\Big)P \succeq 0,
		\end{split} 
	\end{equation}
	and
	\begin{equation}\label{LMI2_Per_On} 
		\begin{bmatrix} M &  P\\ P &  \frac{\gamma}{\chi}I- P \end{bmatrix} \succeq 0, 
	\end{equation}
	are verified and for the horizon $ \sigma^* $:
	
	\begin{equation}
		\begin{split}
			&\min_{\psi \in \mathbb{R}\setminus \{0\}} \psi \\
			& \text{s.t.}~ P \succ \underline{\zeta}I,~\underline{\zeta} > \frac{\mu}{\psi}.
		\end{split}
	\end{equation}
	\begin{figure}[!h] 
		\centering
		\subfloat{%
			\resizebox*{5cm}{!}{\includegraphics{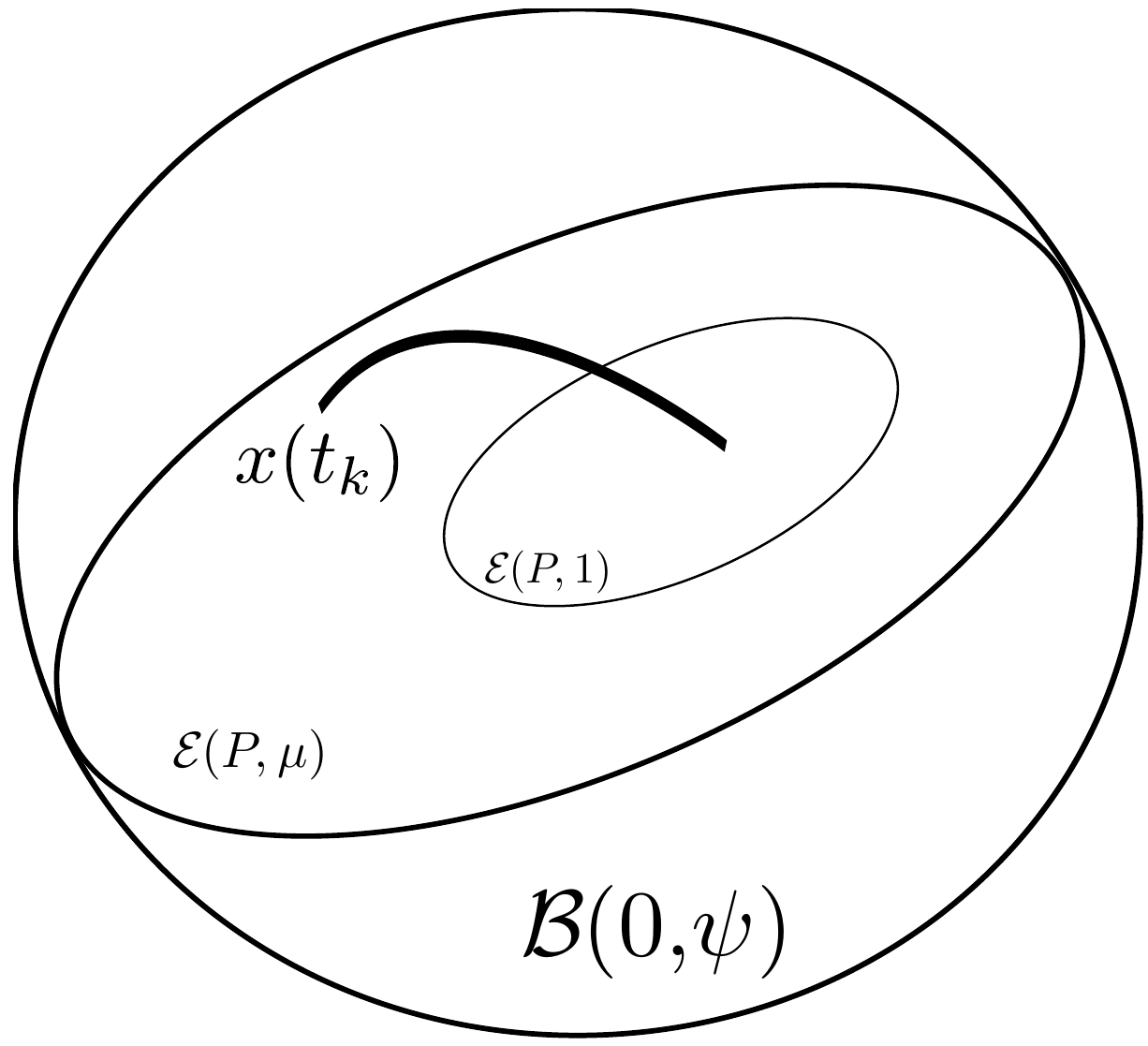}}}\hspace{2pt}
		\caption{ The illustration of three regions $ \mathcal{E}(P,1) $, $ \mathcal{E}(P,\mu) $, and $ \mathcal{B}(0,\psi) $.} 	\label{MatP2}
	\end{figure}
\end{remark}

\subsubsection{Implementation Algorithm}
Algorithm \ref{Alg3} provides a self triggering algorithm for the sampled-date system \eqref{CLSP} which generates a next optimal sampling horizon $ {\sigma}_k $ based on $ x(\tau_k) $ such that the selected sampling horizon ensures the system's solution is GUUB.

\begin{algorithm}
	\caption{Self Triggering Mechanism for Sampled-Data System \eqref{CLSP}: Online case}\label{Alg3}
	\begin{algorithmic}
		\STATE
		\STATE \textbf{{OFFLINE Procedure:}}
		\begin{itemize}
			\STATE Define $ l_{min} $ and $ l_{max}  $, the minimum and maximum length of the considered sampling horizons.
			\STATE Select positive constant $ \gamma $.
			\STATE Define a constant sampling interval $ T $ and  a finite set $ \mathcal{S}=\{\mathtt{S}^1, \mathtt{S}^2,\cdots,\mathtt{S}^m\} $ of sensors' number.
			\STATE Design the set $ S^{l_{max}}_{l_{min}}(\mathcal{S}) $ of sampling horizons of length $l \in 	[l_{min},l_{max}] $ constituted of sensors in $ \mathcal{S} $.
			\STATE Compute the matrix $ P=P^T \succ 0 $ and $ M=M^T\succ0 $ for a given stable sequence $ \sigma^*\in S^{l_{max}}_{l_{min}}(\mathcal{S}) $ i.e. $ \Phi_{\sigma^*} $ is a Schur matrix satisfying: 
			$\Phi_{\sigma^*}^T(P+M)\Phi_{\sigma^*} +\Big(e^{(-\beta|{\sigma^*} |T) }-\gamma\Big)P \succeq 0, \begin{bmatrix} M &  P\\  P & \frac{\gamma}{\chi} I- P \end{bmatrix} \succeq 0$ in which $ \chi =\Big(\varpi  \sum_{q=0}^{|\sigma^*|-1}C^q \Big)^2 $.
		\end{itemize}
		\STATE
		\STATE\textbf{{ONLINE Procedure:} At each instant $ \tau_k $:}
		\STATE$ {T_{avg}}_{max}=\frac{\sum_{j\in\{\mathtt{S}_{\sigma^*}^j = 0\}}1+|\sigma^* | }{|\mathcal{S}||\sigma^* |} $
		\STATE $ \sigma_{opt}=\{\sigma^*\} $
		\FORALL{$ \sigma \subset S^{l_{max}}_{l_{min}}(\mathcal{S}) $:}
		\IF{$ \begin{pmatrix}
				x_k \\
				1
			\end{pmatrix}^T U_{\sigma} \begin{pmatrix}
				x_k \\
				1
			\end{pmatrix}\geq 0 $} 
		\STATE $ T_{avg}=\frac{\sum_{j\in\{\mathtt{S}_{\sigma}^j = 0\}}1+|\sigma | }{|\mathcal{S}||\sigma |}$ 
		\IF {$ T_{avg} > {T_{avg}}_{max}$} 
		\STATE $ {T_{avg}}_{max}= T_{avg}$
		\STATE $ {\sigma_{opt}}=\{\sigma\} $
		\ELSIF {$ T_{avg} ={T_{avg}}_{max}$}
		\STATE $ {\sigma_{opt}}=\{\sigma_{opt},\sigma\} $ 
		\ENDIF
		\ENDIF	 	 
		\ENDFOR
		\STATE Choose randomly  $ \sigma_k $, in $ {\sigma_{opt}} $.	
	\end{algorithmic}
\end{algorithm}
\subsection{Proposed self triggering mechanism and Stability Analysis: Offline case}
Similar to the Section \ref{wp-online}, an offline tractable method is used for splitting of the state space into regions of user-defined number to compute an optimal possible stablizing horizon based on a region that the state of the system \eqref{CLSP} is belong to such that ensures globally uniformly ultimately  boundedness of the solution.
\subsubsection{Self Triggering Mechanism and Stability Analysis} 
Proposition \ref{STMS-on-p} ensures globally uniformly ultimately  boundedness of the discretized system \eqref{CLSP} in which the next optimal sampling horizon $ \Psi_{(c)} $ defined by proposed self-triggering for a conic region $ \mathcal{R}_c, c \in \mathbb{N} $.
\begin{proposition}\label{STMS-on-p}
	Consider scalars $ \beta>0$, $\gamma_1>0 $, $\gamma_2>0 $, and a constant sampling interval $ T $. If there exists $ \sigma^* \in S^{l_{max}}_{l_{min}}(\mathcal{S}) $ and  symmetric positive definite matrices $ P$ such that 
	\begin{equation} \label{SigmaOmega-off}
		\begin{split}
			U=&\begin{pmatrix}
				-\Phi_{\sigma^*}^TP\Phi_{\sigma^*}+\Big(\bar{\beta}-\gamma_1\Big)P & * & *\\
				-P\Phi_{\sigma^*} & \frac{\gamma_2}{\chi}I-P & *\\
				\textbf{0} & \textbf{0} &  {-\gamma_2+\gamma_1} 
			\end{pmatrix}\succeq 0 ,			
		\end{split} 
	\end{equation}
	in which $ \chi =\varpi \sum_{q=0}^{|\sigma^*|-1}C^q $ and $ \bar{\beta}=e^{(-\beta|{\sigma^*} |T) } $, then the system \eqref{CLSP} is GUUB with the self-triggering mechanism defined by

\begin{equation}\label{STMoff-p}
	\sigma_k \in
	\begin{cases}
		\begin{multlined}
			\argmax_{\substack{\sigma \in \Psi_{(c)}, \\ c \in \{1,\cdots,N\},\\ x_k \in \mathcal{R}_c}}
			\left\{ \frac{\sum_{j\in\{\mathtt{S}_{\sigma}^j = 0\}}1+|\sigma | }{|\mathcal{S}||\sigma |} \right\},
		\end{multlined}
		& \text{If } x_k \notin \mathcal{E}(P,1), \\
		\Big\{\Big(0\Big)\Big\}, & \text{If } x_k \in \mathcal{E}(P,1).
	\end{cases}
\end{equation}

	with
	\begin{equation} 
		\begin{split}
			& \Psi_{(c)}  = \argmax_{{\sigma} \in {{\bar{S}}^{l_{max}}_{l_{min}}}(\mathcal{S},\mathcal{R}_c)}\Bigg\{ \frac{\sum_{j\in\{\mathtt{S}_{\sigma}^j = 0\}}1+|\sigma | }{|\mathcal{S}||\sigma |}  \Bigg\},
		\end{split}
	\end{equation}
	where
	\begin{equation}\label{es-off-p}
		\begin{split}
			&{{\bar{S}}^{l_{max}}_{l_{min}}}(\mathcal{S},\mathcal{R}_c) =\Big\{\sigma \in S^{l_{max}}_{l_{min}}(\mathcal{S}),~\epsilon_c>0;~U_{c}\succeq 0 \Big\},
		\end{split}
	\end{equation}
	and $ U_{c}=[u_{ij}] $ is a symmetric matrix defined by
	\begin{equation}
		\begin{split}		
			u_{11}&=\epsilon_c Q_c-\Phi_{\sigma}^T P \Phi_{\sigma}+\bar{\beta} P-\gamma_1 P,\\
			u_{22} & =\frac{\gamma_2}{\varpi \sum_{q=0}^{|\sigma|-1}C^q}I-P,\\
			u_{33}& ={-\gamma_2+\gamma_1},\\
			u_{21}& =-P\Phi_{\sigma},\\
			u_{31}& =\textbf{0},\\
			u_{32}&=\textbf{0} .
		\end{split}
	\end{equation}
	Furthermore, the solution of \eqref{CLSP} converges and remains in $ \mathcal{E}(P,\mu) $, with 
	\begin{equation}
		\mu= \lambda_{max}(P) \Big(\frac{C'}{\lambda_{min}(P)}+ \varpi\Big)^2 ,
	\end{equation} 
	where $ C'=\Big\| \tilde{A}_{(0)} \Big\|_2 $  and $ \varpi $ is given in Assumption \ref{pertbounded}.	
	
\end{proposition}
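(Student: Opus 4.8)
The plan is to follow the template of the online perturbed case (Proposition~\ref{STMS-off-p}) and graft onto it the conic-relaxation device already used in the unperturbed offline case (Proposition~\ref{STM_on}). Throughout I work with the quadratic Lyapunov function $V(x_k)=x_k^T P x_k$ for the $P$ furnished by \eqref{SigmaOmega-off}, and I split the analysis of the closed loop \eqref{CLSP} according to whether the current state lies inside or outside the target ellipsoid $\mathcal{E}(P,1)$, exactly as the switching rule \eqref{STMoff-p} does. The first task is well-posedness: I would show that $\Psi_{(c)}$, hence the $\argmax$ in \eqref{STMoff-p}, is nonempty for every $c\in\{1,\dots,N\}$. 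The candidate $\sigma^*$ satisfies the region-free inequality \eqref{SigmaOmega-off}, i.e.\ $U\succeq 0$, and $U$ is precisely the matrix obtained from $U_c$ in \eqref{es-off-p} by setting $\epsilon_c=0$. Since each $Q_c$ is a fixed bounded matrix, a continuity argument shows that for sufficiently small $\epsilon_c>0$ the perturbed block $U_c=U+\operatorname{diag}(\epsilon_c Q_c,\mathbf{0},0)$ remains positive semidefinite, so $\sigma^*\in{\bar S}^{l_{max}}_{l_{min}}(\mathcal{S},\mathcal{R}_c)$; here the strict form of the feasibility inequality is convenient, so that a semidefinite margin absorbs $\epsilon_c\|Q_c\|_2$.

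Next I treat the decrease outside the target. Fix $x_k\notin\mathcal{E}(P,1)$ and let $c$ be an index with $x_k\in\mathcal{R}_c$. The rule \eqref{STMoff-p} selects $\sigma_k\in\Psi_{(c)}\subseteq{\bar S}^{l_{max}}_{l_{min}}(\mathcal{S},\mathcal{R}_c)$, so $U_c\succeq 0$ holds for some $\epsilon_c>0$. I would then evaluate the associated quadratic form on the augmented vector $\xi_k=(x_k,\bar w_k,1)$. The three diagonal blocks of $U_c$ are meant to encode, respectively, the Lyapunov decrease at rate $\bar\beta=e^{-\beta|\sigma_k|T}$, the perturbation budget $\|\bar w_k\|_2\le\varpi\sum_{q=0}^{|\sigma_k|-1}C^q$ from Assumption~\ref{pertbounded}, and the exterior condition $x_k^T P x_k\ge 1$; the off-diagonal block $-P\Phi_{\sigma_k}$ supplies exactly the cross term needed to complete the square $(\Phi_{\sigma_k}x_k+\bar w_k)^T P(\Phi_{\sigma_k}x_k+\bar w_k)=V(x_{k+1})$. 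The conic term $\epsilon_c Q_c$, together with $x_k^T Q_c x_k\ge 0$ valid on $\mathcal{R}_c$, is what localizes the otherwise region-free estimate to $\mathcal{R}_c$ through the S-procedure. Collecting terms, $\xi_k^T U_c\,\xi_k\ge 0$ should rearrange into $V(x_{k+1})\le e^{-\beta|\sigma_k|T}V(x_k)$, a strict geometric contraction of $V$ outside the target set.

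For $x_k\in\mathcal{E}(P,1)$ the rule applies the null horizon $\sigma_k=(0)$, so $x_{k+1}=\tilde A_{(0)}x_k+\bar w_k$ with $\|\bar w_k\|_2\le\varpi$. Bounding $\|x_{k+1}\|_2\le C'\|x_k\|_2+\varpi$ with $C'=\|\tilde A_{(0)}\|_2$ and using the containment $\|x_k\|_2^2\le 1/\lambda_{min}(P)$ on $\mathcal{E}(P,1)$ yields $V(x_{k+1})\le\mu$ with the stated $\mu=\lambda_{max}(P)\big(C'/\lambda_{min}(P)+\varpi\big)^2$; hence once the state reaches $\mathcal{E}(P,1)$ it never leaves $\mathcal{E}(P,\mu)$. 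Combining the two regimes closes the argument: from any initial condition $V$ decays geometrically until the trajectory enters $\mathcal{E}(P,1)$ in finite time, after which it stays trapped in $\mathcal{E}(P,\mu)$, which is exactly the GUUB property with an ultimate bound fixed by $\mu$ and $\lambda_{min}(P)$.

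The delicate part is the exterior estimate. The crux is to verify that the single semidefinite condition $U_c\succeq 0$ genuinely bundles the three quadratic constraints (conic membership, disturbance norm, and $x_k\notin\mathcal{E}(P,1)$) with the correct nonnegative multipliers $\epsilon_c,\gamma_1,\gamma_2$, and that after completing the square the constant and disturbance-weighted terms cancel to leave the clean contraction $V(x_{k+1})\le e^{-\beta|\sigma_k|T}V(x_k)$ rather than merely $V(x_{k+1})\le V(x_k)+\text{const}$. Pinning down the sign and scaling of the $Q_c$-block and of the $\frac{\gamma_2}{\chi}I-P$ block so that the S-procedure points in the stabilizing direction is where the real work lies; the remaining pieces, namely feasibility and the ultimate-bound computation, are routine adaptations of Propositions~\ref{STMS-off-p} and \ref{STM_on}.
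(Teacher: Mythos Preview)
Your proposal is correct and follows essentially the same route as the paper's proof: split into the two regimes dictated by \eqref{STMoff-p}, use the S-procedure to combine the three quadratic constraints (conic membership $x_k^TQ_cx_k\ge0$, disturbance bound $\|\bar w_k\|_2\le\theta$, exterior condition $x_k^TPx_k\ge1$) with multipliers $\epsilon_c,\gamma_2,\gamma_1$ so that $U_c\succeq0$ yields $V(x_{k+1})\le e^{-\beta|\sigma_k|T}V(x_k)$ outside $\mathcal{E}(P,1)$, and then reuse the Case~(ii) bound from Proposition~\ref{STMS-off-p} verbatim inside. The only noticeable difference is the non-emptiness step: the paper simply writes $U_c=U+\operatorname{diag}(\epsilon_cQ_c,\mathbf{0},0)$ and asserts this is a sum of two semidefinite matrices, whereas you invoke a continuity argument requiring a strict margin in \eqref{SigmaOmega-off}; your version is actually more honest, since $Q_c$ is generally indefinite.
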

\begin{proof}
	See Appendix.
\end{proof}
\subsubsection{Implementation Algorithm}
Algorithm \ref{Alg4} provides a self triggering mechanism for the sampled-date system \eqref{CLSP} which generates the next stabilizing optimal sampling horizon $ {\sigma} \in \Psi_{(c)} $ based on $ x(\tau_k) \in \mathcal{R}_c $ such that the selected sampling horizon ensures globally uniformly ultimately boundedness of the solution.
\begin{algorithm}
	\caption{Self Triggering Mechanism for Sampled Data System \eqref{CLSP}: Offline case}\label{Alg4}
	\begin{algorithmic}[H]
		\STATE
		\STATE \textbf{{OFFLINE Procedure:}}
		\begin{itemize}
			\STATE Define $ N $: number of conic regions.
			\STATE Define $ l_{max} $ and $ l_{min}  $, the minimum and maximum length of the considered sampling horizons.
			\STATE Select positive constants $ \gamma_1 $ and $ \gamma_2 $.
			\STATE Define a constant sampling interval $ T $ and  a finite set $ \mathcal{S}=\{\mathtt{S}^1, \mathtt{S}^2,\cdots,\mathtt{S}^m\} $ of sensors' number.
			\STATE Design the set $ S^{l_{max}}_{l_{min}}(\mathcal{S}) $ of sampling hoizons of length $l \in 	[l_{min},l_{max}] $ constituted of sensors in $ \mathcal{S} $.
			\STATE Compute the matrix $ P=P^T\succ0 $ for a given stable sequence $ \sigma^*\in S^{l_{max}}_{l_{min}}(\mathcal{S}) $ i.e. $ \Phi_{\sigma^*} $ is a Schur matrix satisfying: $ \begin{pmatrix}
				-\Phi_{\sigma^*}^TP\Phi_{\sigma^*}+\Big(\bar{\beta}-\gamma_1 \Big)P & * & *\\
				-P\Phi_{\sigma^*} & \frac{\gamma_2}{\chi}I_2-P & *\\
				\textbf{0} & \textbf{0} &  {-\gamma_2+\gamma_1} 
			\end{pmatrix}\succeq 0	$ in which $ \chi =\varpi \sum_{q=0}^{|\sigma^*|-1}C^q $ and $ \bar{\beta}=e^{(-\beta|{\sigma^*} |T) } $.
			\STATE Design a matrix $ Q_c $ for each conic region $ \mathcal{R}_c=\big\{x \in \mathbb{R}^n: x^T Q_c x \geq 0 \big\},~\forall c \in \{1,\cdots,N\}  $.
			\STATE  Compute the set $ \Psi_{(c)} $ of optimal sampling horizons for each conic region $ \mathcal{R}_c $:
		\end{itemize}
		{\renewcommand\labelitemi{}
			\begin{itemize}	
				\FORALL {$c=1:N$} 
				\STATE $ {T_{avg}}_{max}=\frac{\sum_{j\in\{\mathtt{S}_{\sigma^*}^j = 0\}}1+|\sigma^*| }{|\mathcal{S}||\sigma^*|} $
				\STATE $ {\Psi^{opt}_{(c)}}=\{\sigma^*\} $  
				\FORALL{$ \sigma \subset S^{l_{max}}_{l_{min}}(\mathcal{S}) $:}
				\IF{$ U \succeq 0 $} 
				\STATE $ T_{avg}=\frac{\sum_{j\in\{\mathtt{S}_{\sigma}^j = 0\}}1+|\sigma | }{|\mathcal{S}||\sigma |}$ 
				\IF {$ T_{avg} > {T_{avg}}_{max}$} 
				\STATE $ {T_{avg}}_{max}= T_{avg}$
				\STATE $ {\Psi^{opt}_{(c)}}=\{\sigma\} $
				\ELSIF {$ T_{avg} ={T_{avg}}_{max}$}
				\STATE $ {\Psi^{opt}_{(c)}}=\{\Psi^{opt}_{(c)},\sigma\} $
				\ENDIF
				\ENDIF	 	 
				\ENDFOR
				\STATE $ \Psi_{(c)}={\Psi^{opt}_{(c)}} $ 
				\ENDFOR
		\end{itemize}}	
		\STATE
		\STATE\textbf{{ONLINE Procedure:} At each instant $ \tau_k $:}
		\begin{itemize}
			\STATE Determine the  conic region $ \mathcal{R}_c,~c \in \{1,\cdots,N\} $, in which current state $ x_k $ belongs.
			\STATE Select randomly a corresponding optimal horizon $ \sigma_k $ from the sets $ \Psi_{(c)},~c \in \{1,\cdots,N\} $, for which $ x_k \in \mathcal{R}_c $. 
		\end{itemize}
	\end{algorithmic}
\end{algorithm}


\section{Simulation}\label{Simulation}
\subsection{Perturbation-free Case}
\subsubsection{Proposed Self Triggering Mechanism: Online Procedure}
To verify the performance of the self-triggering mechanism \eqref{STM-on-wp}, we consider a continuous time LTI control system \eqref{exp1}.  
Consider sampling horizon of minimum length $ l_{min}=1 $ and maximum length $ l_{max}=3 $, a set of sensors $ \Gamma=\{\begin{matrix}
	\mathtt{S}^1, \mathtt{S}^2
\end{matrix}\} $, and the initial state $ x_0=[\begin{matrix} 5& -2 & 5& -2\end{matrix}]$. In addition, we define a constant sampling interval $ T = 0.3 $. 

Evolution of the system’s states and Lyapunov function using variable sensors sampling with a decay rate $ \beta=0 $ are shown in Fig.{ \ref{state_wp_on}} and Fig.~{\ref{lyp_wp_on}}, respectively. 

It is observed that by employing the proposed mechanism, the system reduced sensor utilization by $ 73.58\% $ compared to the scenario in which the system uses the sensors in each time step. In addition, Sensors' status for this simulation is shown in Fig.~{\ref{sstatus_wp_on}}.

\begin{figure}[!htpb] 
	\centering
	\subfloat{%
		\resizebox*{8cm}{!}{\includegraphics{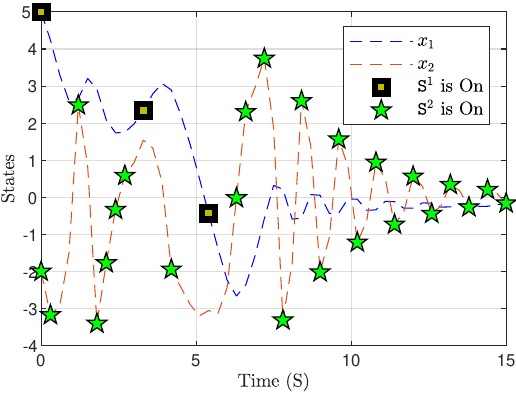}}}
	\caption{System's states.} 	\label{state_wp_on}
\end{figure} 

\begin{figure}[!htpb] 
	\centering
	\subfloat{%
		\resizebox*{8cm}{!}{\includegraphics{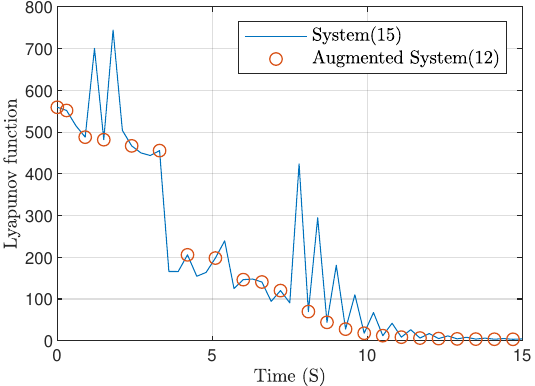}}}
	\caption{Lyapunov function.} 	\label{lyp_wp_on}
\end{figure} 

\begin{figure}[!htpb] 
	\centering
	\subfloat{%
		\resizebox*{9cm}{!}{\includegraphics{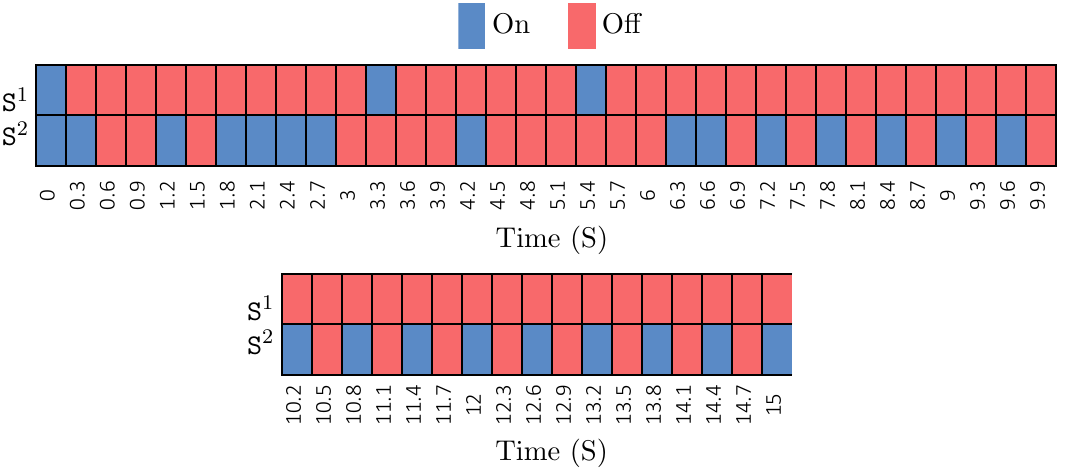}}}
	\caption{Sensors' status in each time step.} 	\label{sstatus_wp_on}
\end{figure}

\subsubsection{Proposed Self Triggering Mechanism: Offline Procedure}
We verify the performance of the self-triggering mechanism \eqref{STM-off-wp}, we consider a continuous time LTI control system \eqref{exp1}. Consider sampling horizon of minimum length $ l_{min}=1 $ and maximum length $ l_{max}=6 $, a set of sensors $ \Gamma=\{\begin{matrix}
	\mathtt{S}^1, \mathtt{S}^2
\end{matrix}\} $, and the initial state $ x_0=[\begin{matrix} 15& -1.5 & 15 & -1.5 \end{matrix}]$. In addition, we define a constant sampling interval $ T = 0.205 $. In addition, the number of region is $ N =15 $;

Evolution of the system’s states and Lyapunov function using variable sensors sampling with a decay rate $ \beta=0 $ are shown in Fig.{ \ref{state_wp_off}} and Fig.{ \ref{lyp_wp_off}}, respectively. 

It is observed that by employing the proposed mechanism, the system reduced sensor utilization by $ 70.27\% $ compared to the scenario in which the system uses the sensors in each time step. In addition, Sensors' status for this simulation is shown in Fig.~{\ref{sstatus_wp_off}}.

\begin{figure}[!htpb] 
	\centering
	\subfloat{%
		\resizebox*{8cm}{!}{\includegraphics{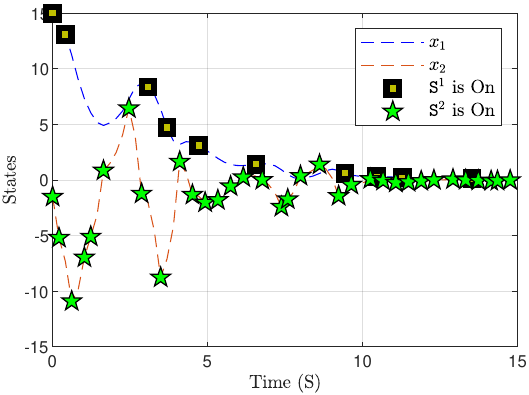}}}
	\caption{System's states.} 	\label{state_wp_off}
\end{figure} 

\begin{figure}[!htpb] 
	\centering
	\subfloat{%
		\resizebox*{8cm}{!}{\includegraphics{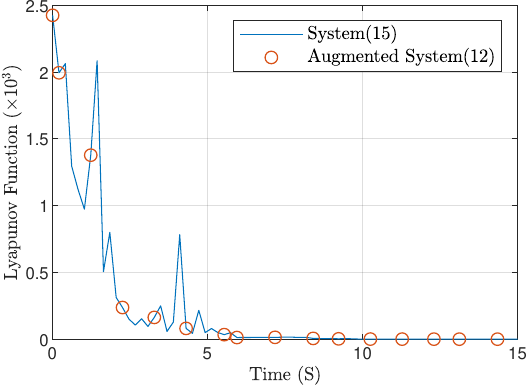}}}
	\caption{Lyapunov function.} 	\label{lyp_wp_off}
\end{figure} 

\begin{figure}[!htpb] 
	\centering
	\subfloat{%
		\resizebox*{9cm}{!}{\includegraphics{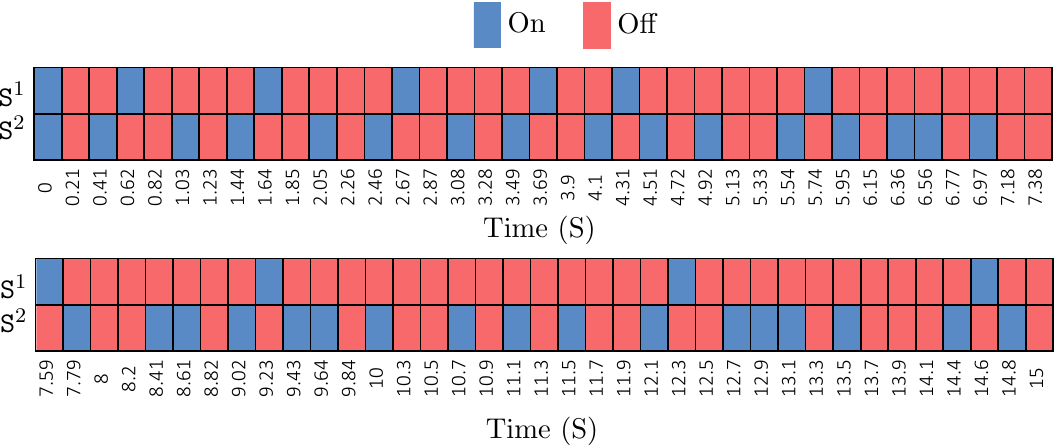}}}
	\caption{Sensors' status in each time step.} 	\label{sstatus_wp_off}
\end{figure}

\subsection{Perturbed Case}
\subsubsection{Proposed Self Triggering Mechanism: Online Procedure}
Consider a continuous time LTI control system described in \eqref{systP} with
\begin{equation}\label{sys_p}
\begin{split}
		A & =\begin{bmatrix}
		0&1\\-2&3
	\end{bmatrix},\\ 
	B & =\begin{bmatrix}
		0\\1
	\end{bmatrix}, \\
	K & =\begin{bmatrix}
		1&-4
	\end{bmatrix}, \\ 
	D & =\begin{bmatrix}
		1\\1
	\end{bmatrix},\\
	w(t)& =sin(5\pi t)
\end{split}
\end{equation}

Also, we choose sampling horizon of minimum length $ l_{min}=1 $ and maximum length $ l_{max}=6 $, a set of sensors $ \Gamma=\{\begin{matrix}
	\mathtt{S}^1, \mathtt{S}^2
\end{matrix}\} $, and the initial state $ x_0=[\begin{matrix} 5& -2 \end{matrix}]$.
The LMI in equations \eqref{LMI1_Per_On} and \eqref{LMI2_Per_On} with $ \gamma=0.35 $ yields, 
\begin{equation*}
	\begin{split}
		P&=\begin{bmatrix} 4.5107  & -0.3699 &  -0.7505  & -1.3990 \\ -0.3699  &  5.0824 &   0.4709  & -1.3114 \\ -0.7505  &  0.4709  &  6.2863  &  0.2754 \\
			-1.3990  & -1.3114 &   0.2754 &   2.0002 \end{bmatrix},\\  
		M&=\begin{bmatrix} 9.6164  & -0.2298  & -2.4446  & -2.3527 \\
			-0.2298  & 10.7068  &  1.3307 &  -3.6711\\
			-2.4446  &  1.3307  & 10.9375  &  0.6381\\
			-2.3527  & -3.6711 &   0.6381  &  4.5667 \end{bmatrix}
	\end{split}
\end{equation*}

Evolution of the system’s states and Lyapunov function using variable sampling intervals with a decay rate $ \beta=0 $ are shown in Fig.{ \ref{state_p_on}} and Fig.{ \ref{lyp_p_on}}, respectively. It is seen that the system used the both sensors $ 68.94\% $ less than the scenario in which the system uses the sensors in each time step. In addition, Sensors' status for this simulation is shown in Fig.{ \ref{sstatus_p_on}}.
\begin{figure}[!htpb] 
	\centering
	\subfloat{%
		\resizebox*{8cm}{!}{\includegraphics{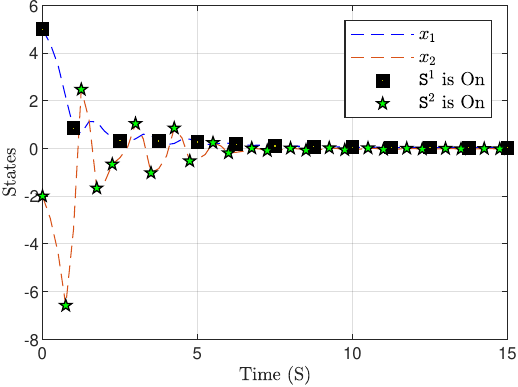}}}
	\caption{System's states.} 	\label{state_p_on}
\end{figure} 

\begin{figure}[!htpb] 
	\centering
	\subfloat{%
		\resizebox*{8cm}{!}{\includegraphics{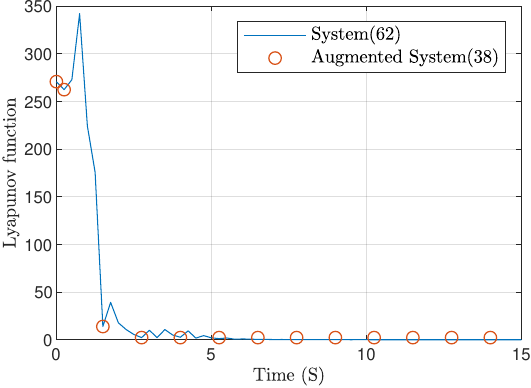}}}
	\caption{Lyapunov function.} 	\label{lyp_p_on}
\end{figure} 

\begin{figure}[!htpb] 
	\centering
	\subfloat{%
		\resizebox*{9cm}{!}{\includegraphics{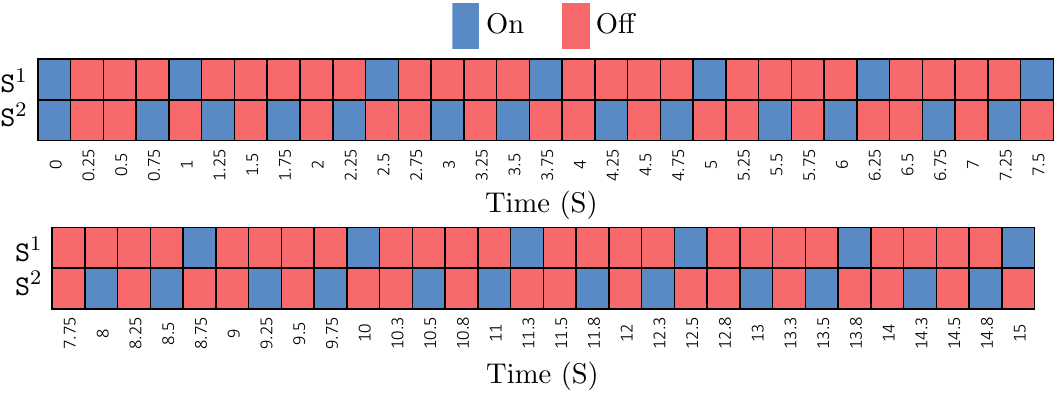}}}
	\caption{Sensors' status in each time step.} 	\label{sstatus_p_on}
\end{figure}

\subsubsection{Proposed Self Triggering Mechanism: Offline Procedure}
We verify the performance of the self-triggering mechanism \eqref{STMoff-p}, we consider a continuous time LTI control system \eqref{sys_p}. Consider sampling horizon of minimum length $ l_{min}=3 $ and maximum length $ l_{max}=6 $, a set of sensors $ \Gamma=\{\begin{matrix}
	\mathtt{S}^1, \mathtt{S}^2
\end{matrix}\} $, and the initial state $ x_0=[\begin{matrix} 15& -1.5 & 15 & -1.5 \end{matrix}]$. In addition, we define a constant sampling interval $ T = 0.205 $. In addition, the number of region is $ N =15 $;

Evolution of the system’s states and Lyapunov function using variable sensors sampling with a decay rate $ \beta=0 $ are shown in Fig.{ \ref{state_p_off}} and Fig.{ \ref{lyp_p_off}}, respectively. 

It is observed that by employing the proposed mechanism, the system reduced sensor utilization by $ 59.21\% $ compared to the scenario in which the system uses the sensors in each time step. In addition, Sensors' status for this simulation is shown in Fig.~{\ref{sstatus_p_off}}.

\begin{figure}[!htpb] 
	\centering
	\subfloat{%
		\resizebox*{8cm}{!}{\includegraphics{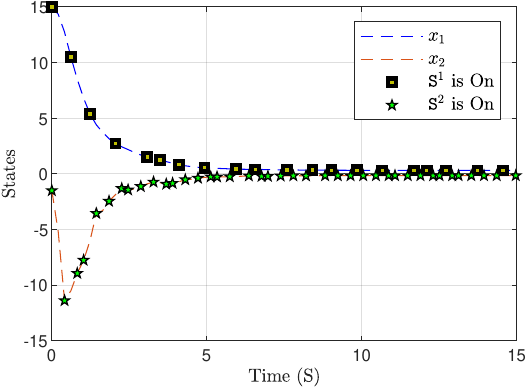}}}
	\caption{System's states.} 	\label{state_p_off}
\end{figure} 

\begin{figure}[!htpb] 
	\centering
	\subfloat{%
		\resizebox*{8cm}{!}{\includegraphics{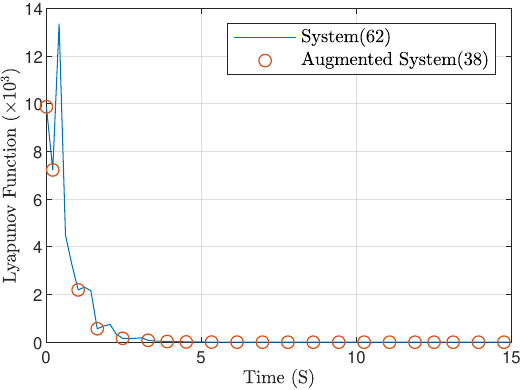}}}
	\caption{Lyapunov function.} 	\label{lyp_p_off}
\end{figure} 

\begin{figure}[!htpb] 
	\centering
	\subfloat{%
		\resizebox*{9cm}{!}{\includegraphics{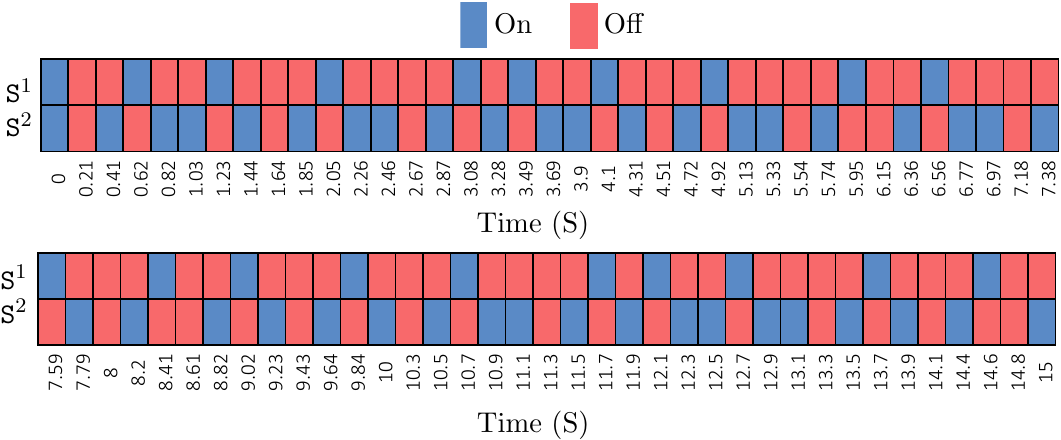}}}
	\caption{Sensors' status in each time step.} 	\label{sstatus_p_off}
\end{figure}

\section{Conclusion}\label{Conclusion}

This work develops self-triggered control for linear systems with asynchronous measurements. The controller computes optimal horizons (sensor selection sequences) at each sampling instant, maximizing inter-sample intervals while guaranteeing exponential stability for unperturbed systems and global uniform ultimate boundedness for bounded disturbances. Two implementations address computational complexity. The online version solves an optimization at each update for theoretical optimality. The offline version precomputes optimal horizons using conic partitioning, reducing online computation to a lookup. Simulations demonstrate 59-74\% reductions in sensor utilization compared to periodic sampling.

Three extensions merit investigation. First, adaptive triggering thresholds adjusting to observed disturbance patterns could improve robustness. Second, optimizing sensor subsets (multiple sensors per update) would provide finer control over communication-performance tradeoffs. Third, extending to nonlinear systems requires addressing the interaction between nonlinear dynamics and asynchronous state estimates. This framework enables resource-aware control in networked systems by optimizing sensor selection while maintaining stability guarantees.

\section*{Acknowledgments}
The author gratefully acknowledges Dr. Christophe Fiter for his significant contributions and dedicated supervision throughout this research conducted at CRIStAL in 2017--2018, and thanks Dr. Lotfi Belkoura for his valuable guidance and support.

\bibliographystyle{unsrtnat}
\bibliography{references} 

\section{Appendix: Proofs}\label{Appendix}
\begin{proof}[Proof of Proposition \ref{STMS-off-p}:]
	Consider $ x_k\in\mathbb{R}^n $	and the quadratic Lyapunov function $  V(x)=x^T P x $ in which the matrix $ P $ satisfies \eqref{LMI1-STM-on-p} and \eqref{LMI2-STM-on-p}.	
	\begin{enumerate}
		\item [Case (i):]  $ x_k \not\in \mathcal{E}(P,1)  $. Firstly we should prove that the set $ \bar{S}_{l_{min}}^{l_{max}}(\mathcal{S},x_k) $ in not empty (i.e. for a given $ x_k \in \in \mathbb{R}^n $, $\exists \sigma \in {\bar{S}}^{l_{max}}_{l_{min}}(\mathcal{S},x_k) $). From \eqref{LMI2-STM-on-p} and Schur complement, we have $ M=M^T\succeq 0 $, and $ \frac{\gamma}{\chi}I-PM^{-1}P- P \succeq 0 $. The latter equivalently can be written as 
		\begin{equation}\label{equival-LMI2}
			\frac{\gamma}{\chi} \geq \lambda_{max} \big(PM^{-1}P+P\big).   
		\end{equation}
		Consider a sampling horizon $ \sigma^* $ satisfying \eqref{LMI1-STM-on-p}. Then, from \eqref{equival-LMI2} and \eqref{U_sigma}, one has 	
		\begin{equation}
			\begin{pmatrix}
				x_k \\
				1
			\end{pmatrix}^T U_{\sigma^*} \begin{pmatrix}
				x_k \\
				1
			\end{pmatrix} \geq 0,
		\end{equation}	
		which implies that $ \sigma^* \in {\bar{S}}^{l_{max}}_{l_{min}}(\mathcal{S},x_k) $ (i.e. the set $\bar{S}_{l_{min}}^{l_{max}}(\mathcal{S},x_k) $ in not empty, for any $ x_k \notin \mathcal{E}(P,1)$).
		
		Now, consider $ \sigma_k \in \bar{S}_{l_{min}}^{l_{max}}(\mathcal{S},x_k) $. Then, we have	
		\begin{equation}\label{NinRP2-on}
			\begin{split}
				V(x_{k+1}) & = x_{k+1}^TPx_{k+1} \\
				& = \big(\Phi_{\sigma_k} x_k+\bar{w}_k\big)^TP\big(\Phi_{\sigma_k} x_k+\bar{w}_k\big) \\
				&= x^T_k \Phi^T_{{\sigma}_k} P \Phi_{{\sigma}_k} x_k+ x^T_k \Phi^T_{{\sigma}_k} P \bar{w}_k \\
				&\quad +\bar{w}^T_k P \Phi_{{\sigma}_k} x_k+\bar{w}^T_k P \bar{w}_k .
			\end{split}
		\end{equation}
		
		Using the inequality $ x^Ty+y^Tx \leq x^T M x+ y^T M^{-1} y $ and for any $ x\in \mathbb{R}^n $, $ y\in \mathbb{R}^n $, and $ M=M^T\succ 0 $, we have	
		\begin{equation}\label{inq}
		\begin{split}
				x^T_k \Phi^T_{{\sigma}_k} P \bar{w}_k+\bar{w}^T_k P \Phi_{{\sigma}_k} x_k & \leq x^T_k \Phi^T_{{\sigma}_k} M \Phi_{{\sigma}_k} x_k   \\
				&\quad + \bar{w}^T_k P M^{-1} P \bar{w}_k. 
		\end{split}
		\end{equation} 
		
		Therefore, from \eqref{NinRP2-on}, we have 
		\begin{equation}\label{NinRP2-onn}
			\begin{split}
				V(x_{k+1}) &= x^T_k \Phi^T_{{\sigma}_k} P \Phi_{{\sigma}_k} x_k+ x^T_k \Phi^T_{{\sigma}_k} P \bar{w}_k \\
				& \quad +\bar{w}^T_k P \Phi_{{\sigma}_k} x_k + \bar{w}^T_k P \bar{w}_k \\
				& \leq x^T_k \Phi^T_{{\sigma}_k}\Big( P + M \Big) \Phi_{{\sigma}_k} x_k   + \bar{w}^T_k \Big( P M^{-1} P + P \Big) \bar{w}_k ,
			\end{split}
		\end{equation}
		and furthermore
		\begin{equation}\label{NinRPLMI-on}
			\begin{split}
				V(x_{k+1}) \leq
				\begin{pmatrix}
					x_k\\
					1
				\end{pmatrix}^T \Lambda_{\sigma_k} \begin{pmatrix}
					x_k\\
					1
				\end{pmatrix},
			\end{split}
		\end{equation}
		in which
\begin{equation}\label{Lambda-0n}
	\Lambda_{\sigma_k} =
	\begin{multlined}[t]
		\begin{pmatrix}
			\Phi^T_{{\sigma}_k} \big(P+M\big)\Phi_{{\sigma}_k} & \textbf{0}\\[4ex]
			\textbf{0} & \Big(\varpi \sum_{q=0}^{|\sigma_k|-1}C^q \Big)^2 \\
			& \times \lambda_{max}(P M^{-1} P+P)
		\end{pmatrix}.
	\end{multlined}
\end{equation}
		
		Then, since $ x_k \notin \mathcal{E}(P,1) $ (i.e. $ x_k^T P x_k > 1 $), we have	
		\begin{equation}\label{NinRPLMION}
			\begin{split}
				V(x_{k+1}) & \leq
				\begin{pmatrix}
					x_k\\
					1
				\end{pmatrix}^T \Bigg[\Lambda_{\sigma_k}+ \begin{bmatrix}
					-\gamma P & \textbf{0}\\
					\textbf{0} & \gamma
				\end{bmatrix} \Bigg] \begin{pmatrix}
					x_k\\
					1
				\end{pmatrix},
			\end{split}
		\end{equation}	
		with $ \gamma>0 $. Then, from \eqref{es-on-p}, we get		
		\begin{equation}
			\begin{split}	
				V(x_{k+1}) & \leq \begin{pmatrix}
					x_k\\
					1
				\end{pmatrix}^T 
				\begin{pmatrix}
					e^{(-\beta\sum\nolimits_{j=1}^{|{\sigma_k} |}T_{\sigma_k}^{j }) }P-\gamma P & \textbf{0}\\
					\textbf{0} & \gamma
				\end{pmatrix}
				\begin{pmatrix}
					x_k\\
					1
				\end{pmatrix}\\
				& \leq e^{(-\beta\sum\nolimits_{j=1}^{|{\sigma_k} |}T_{\sigma_k}^{j }) } x_k^T P x_k + \gamma (1-x_k^T P x_k),
			\end{split}
		\end{equation}
		Therefore, since $x_k^T P x_k >  1 $, one has $ V(x_{k+1}) \leq e^{(-\beta|{\sigma_k} |T) } V(x_k)$. 	
		\item [ Case (ii):]  $ x_k \in \mathcal{E}(P,1) $. According to the self-triggering mechanism \eqref{STM-on-p} and from \eqref{CLSP}, we have	
		\begin{equation}
			x_{k+1} = \Phi_{\sigma_k}x_k+\bar{w}_k,
		\end{equation}
		where $ \sigma_k = \big(0\big) $. Then, $ \Phi_{\sigma_k}=\tilde{A}_{(0)} $ and $ \tau_{k+1}=\tau_k+T $. From the section \ref{sysref}, one has
		\begin{equation}\label{interball1}
			\begin{split}
				\Big\| x_{k+1} \Big\|_2 \leq \Big\| \Phi_{\sigma_k}x_k \Big\|_2+\Big\| \bar{w}_k \Big\|_2 & \leq C'\Big\|  x_k \Big\|_2 + \varpi \\
				& \leq \frac{C'}{\lambda_{min}(P)}+ \varpi=\sqrt{\eta}.
			\end{split}
		\end{equation}
		where $ C'=\Big\| \Phi_{\sigma_k}\Big\|_2=\Big\|\tilde{A}_{(0)} \Big\|_2 $  and $ \varpi $ is given in Assumption \ref{pertbounded}. We want to find a scalar $ \mu >0 $ such that  $  \mathcal{B}(0,\eta) \subset \mathcal{E}(P,\mu)  $. We know that $ x_k^T P x_k \leq \lambda_{max}(P) x_k^T x_k $, then form \eqref{interball1},
		
		\begin{equation}
			\mu \leq  \lambda_{max}(P) \Big(\frac{C'}{\lambda_{min}(P)}+ \varpi\Big)^2.
		\end{equation} 
		
	\end{enumerate}
	From  (i) and  (ii), one has $ x_k \in \mathcal{E}(P,\mu),~k\in\mathbb{N} $. Therefore, the solution of the system \eqref{CLSP} is globally uniformly ultimately bounded with the proposed self triggering mechanisms \eqref{STM-on-p}.
\end{proof}
\begin{proof}[Proof of Proposition \ref{STM-off-p}:]
	
	Consider $ x_k\in\mathbb{R}^n $	and the quadratic Lyapunov function $  V(x)=x^T P x $ in which the matrix $ P $ satisfies \eqref{SigmaOmega-off}.	
	\begin{enumerate}
		\item [Case (i):]  $ x_k \not\in \mathcal{E}(P,1)  $. Firstly we should prove that the set $ \bar{S}_{l_{min}}^{l_{max}}(\mathcal{S},x_k) $ in not empty (i.e. there exists at least the horizon $ \sigma^* \in {\bar{S}}^{l_{max}}_{l_{min}}(\mathcal{S},x_k),~ \forall x_k \in \mathbb{R}^n $). From \eqref{es-off-p}, the matrix $ U_c $ can be written as
		
		\begin{equation}
			U_c=U+\begin{pmatrix}
				\epsilon_c Q_c & * & *\\
				\textbf{0} & \textbf{0} & *\\
				\textbf{0} & \textbf{0} &  0 
			\end{pmatrix},
		\end{equation}
		which is composed of two semi-definite matrices. Then, clearly, it can be concluded that $ U_c $ is also a semi-definite matrix. Therefore, there exists at least the horizon $ \sigma^* \in {\bar{S}}^{l_{max}}_{l_{min}}(\mathcal{S},x_k),~ \forall x_k \in \mathbb{R}^n $. 
		
		Now, consider $ \sigma_k \in \bar{S}_{l_{min}}^{l_{max}}(\mathcal{S},x_k) $,		
		We want to guarantee that 
		\begin{equation}\label{NinRP}
			\begin{split}
				x_{k+1}^TPx_{k+1}<\bar{\beta} x_{k}^TPx_{k},~\bar{\beta}=e^{(-\beta|{\sigma} |T) }
			\end{split}
		\end{equation}
		
		From \eqref{CLSP} and \eqref{NinRP}, we have
\begin{equation}\label{NinRP2}
	\begin{aligned}
		&\big(\Phi_{\sigma_k} x_k+\bar{w}_k\big)^TP\big(\Phi_{\sigma_k} x_k+\bar{w}_k\big) \leq \bar{\beta} x_{k}^TPx_{k} \\
		&x^T_k \Phi^T_{{\sigma}_k} P \Phi_{{\sigma}_k} x_k+ x^T_k \Phi^T_{{\sigma}_k} P \bar{w}_k \\
		&\quad +\bar{w}^T_k P \Phi_{{\sigma}_k} x_k+\bar{w}^T_k P \bar{w}_k \\
		&\leq-\bar{\beta} x_{k}^TPx_{k},
	\end{aligned}
\end{equation}
		equivalently, equation \eqref{NinRP2} can be written as 
		\begin{equation}\label{NinRPLMI}
			\begin{split}
				\begin{pmatrix}
					x_k\\
					\bar{w}_k\\
					1
				\end{pmatrix}^T \Lambda_{\sigma_k} \begin{pmatrix}
					x_k\\
					\bar{w}_k\\
					1
				\end{pmatrix}>0,
			\end{split}
		\end{equation}
		in which
		\begin{equation}\label{Lambda}
			\begin{split}
				\Lambda_{\sigma_k} = \begin{pmatrix}
					-\Phi_{\sigma_k}^TP\Phi_{\sigma_k}+\bar{\beta}P & * & *\\
					-P\Phi_{\sigma_k} & -P & *\\
					\textbf{0} & \textbf{0} & 0
				\end{pmatrix}.
			\end{split}
		\end{equation}
		
		Since $ \Big\|\bar{w}_k\Big\|_2\leq  \varpi \sum_{q=0}^{|\sigma_k|-1}C^q=\theta $, this can be shown as 
		\begin{equation}\label{PertLMI}
			\begin{split}
				\begin{pmatrix}
					x_k\\
					\bar{w}_k\\
					1
				\end{pmatrix}^T      
				\begin{pmatrix}
					\textbf{0} & \textbf{0} & \textbf{0}\\
					\textbf{0} & \frac{-1}{ \theta}I & \textbf{0}\\
					\textbf{0} & \textbf{0} & 1
				\end{pmatrix}
				\begin{pmatrix}
					x_k\\
					\bar{w}_k\\
					1
				\end{pmatrix}\geq 0.
			\end{split}
		\end{equation}
		
		Also, the constraint $ x(k)\not\in \mathcal{E}(P,1) $ can be written as
		\begin{equation}\label{NILMI}
			\begin{split}
				\begin{pmatrix}
					x_k\\
					\bar{w}_k\\
					1
				\end{pmatrix}^T      
				\begin{pmatrix}
					P & \textbf{0} & \textbf{0}\\
					\textbf{0} & \textbf{0} & \textbf{0}\\
					\textbf{0} & \textbf{0} & -1
				\end{pmatrix}
				\begin{pmatrix}
					x_k\\
					\bar{w}_k\\
					1
				\end{pmatrix}\geq 0.
			\end{split}
		\end{equation}
		
		From equations \eqref{NinRPLMI},\eqref{PertLMI} and \eqref{NILMI} and by using S-procedure, the aim is to guarantee that
\begin{equation}\label{FLMI}
	\begin{multlined}[b]
		\begin{pmatrix}
			x_k\\
			\bar{w}_k\\
			1
		\end{pmatrix}^T \times \\
		\begin{pmatrix}
			-\Phi_{{\sigma}_k}^TP\Phi_{{\sigma}_k}+({\bar{\beta}}-{\gamma}_1) P & * & * \\
			-P\Phi_{{\sigma}_k} & \frac{{\gamma}_2}{ {\theta}}I-P & *\\
			\textbf{0} & \textbf{0} & {-{\gamma}_2+{\gamma}_1}
		\end{pmatrix} \\
		\times \begin{pmatrix}
			x_k\\
			\bar{w}_k\\
			1
		\end{pmatrix}\geq 0.
	\end{multlined}
\end{equation}
		in which $ \gamma_1, \gamma_2 > 0 $.
		
		Let $ x_{k} \in \mathbb{R}^n, k\in \mathbb{N} $, There exists a conic region $ \mathcal{R}_c, c \in \mathbb{N} $ as in \eqref{conicreg} such that $ x_{k} \in  \mathcal{R}_c$. Using the S-procedure, one can obtain that the inequality \eqref{FLMI} is satisfied if and only if there exists a scalar $ \epsilon_c >0 $ such that

\begin{equation}\label{LMI-FLMI}
	\begin{aligned}
		&\begin{pmatrix}
			\epsilon_c Q_c-\Phi_{{\sigma}_k}^TP\Phi_{{\sigma}_k}+({\bar{\beta}}-{\gamma}_1) P & \_ & \_ \\
			-P\Phi_{{\sigma}_k} & \frac{{\gamma}_2}{{\theta}}I-P & *\\
			\textbf{0} & \textbf{0} & {-{\gamma}_2+{\gamma}_1}
		\end{pmatrix} \\
		&\succeq 0
	\end{aligned}
\end{equation}

		Therefore, one has $ V(x_{k+1}) \leq e^{(-\beta|{\sigma_k} |T) } V(x_k)$ for every $ x_k \not\in \mathcal{E}(P,1)  $. 
		\item [Case (ii):]  $ x_k \in \mathcal{E}(P,1) $. According to the self-triggering mechanism \eqref{STM-on-p} and from \eqref{CLSP}, we have	
		\begin{equation}
			x_{k+1} = \Phi_{\sigma_k}x_k+\bar{w}_k,
		\end{equation}
		where $ \sigma_k = \big(0\big) $. Then, $ \Phi_{\sigma_k}=\tilde{A}_{(0)} $ and $ \tau_{k+1}=\tau_k+T $. From the section \ref{sysref}, one has
		\begin{equation}\label{interball}
			\begin{split}
				\Big\| x_{k+1} \Big\|_2 \leq \Big\| \Phi_{\sigma_k}x_k \Big\|_2+\Big\| \bar{w}_k \Big\|_2 & \leq C'\Big\|  x_k \Big\|_2 + \varpi \\
				& \leq \frac{C'}{\lambda_{min}(P)}+ \varpi=\sqrt{\eta}.
			\end{split}
		\end{equation}
		where $ C'=\Big\| \Phi_{\sigma_k}\Big\|_2=\Big\|\tilde{A}_{(0)} \Big\|_2 $  and $ \varpi $ is given in Assumption \ref{pertbounded}. We want to find a scalar $ \mu >0 $ such that  $  \mathcal{B}(0,\eta) \subset \mathcal{E}(P,\mu)  $. We know that $ x_k^T P x_k \leq \lambda_{max}(P) x_k^T x_k $, then form \eqref{interball},
		
		\begin{equation}
			\mu \leq  \lambda_{max}(P) \Big(\frac{C'}{\lambda_{min}(P)}+ \varpi\Big)^2.
		\end{equation}  
	\end{enumerate}
	From  (i) and  (ii), one has $ x_k \in \mathcal{E}(P,\mu),~k\in\mathbb{N} $. Therefore, the solution of the system \eqref{CLSP} is globally uniformly ultimately bounded with the proposed self triggering mechanisms \eqref{STMoff-p}.  
\end{proof}

\end{document}